\newcommand{\mytitle}{Electrostatics on Branching Processes}
\newcommand{\keywords}{Branching process, Boltzmann factor, canonical partition function, grand canonical partition function, rational generating function}
\newcommand{\msc}{
15B52, 
60B05,
60B20, 
60G55,  
60G57,
60J80,
60K37,
82B20,
82B23, 
82B31

}
\newtheorem{thm}{Theorem}[section]
\newtheorem{cor}[thm]{Corollary}
\newtheorem{lemma}[thm]{Lemma}
\newtheorem{thm*}{Theorem}[]
\newtheorem{cor*}[thm*]{Corollary}
\newtheorem{claim*}[thm*]{Claim}
\newtheorem{lemma*}[thm*]{Lemma}
\newtheorem{prop*}[thm*]{Proposition}
\newtheorem{conj*}[thm*]{Conjecture}
\theoremstyle{definition}
\newtheorem{ex}{Example}
\newtheorem{problem*}{Problem}[section]
\newtheorem{question*}{Question}[section]
\newtheorem{defn*}{Definition}
\theoremstyle{remark}
\newcommand{\qq}[1]{\qquad \mbox{#1} \qquad}
\newcommand{\BB}[1]{\ensuremath{\mathbb{#1}}}
\newcommand{\E}{\ensuremath{\BB{E}}}
\newcommand{\N}{\ensuremath{\BB{N}}}
\newcommand{\Q}{\ensuremath{\BB{Q}}}
\newcommand{\R}{\ensuremath{\BB{R}}}
\newcommand{\Z}{\ensuremath{\BB{Z}}}
\newcommand{\C}{\ensuremath{\BB{C}}}
\newcommand{\piecewise}[1]{
\left\{
\begin{array}{ll}
#1
\end{array}
\right.
}
\numberwithin{equation}{section}
\numberwithin{equation}{section}
\begin{document}
\title{\bfseries\sffamily \mytitle}
\author{\sc Christopher D.~Sinclair}
\maketitle

\begin{abstract}
We introduce a random probability measure on the profinite completion of the random tree of a branching process and introduce the canonical and grand canonical ensembles of random repelling particles on this random profinite completion at inverse temperature $\beta > 0$. We think of this as a random spatial process of particles in a random tree, and we introduce the notion of the {\em mean} canonical and grand canonical partition functions where in this context `mean' means averaged over the random environment. We give a recursion for these mean partition functions and demonstrate that in certain instances, determined by the law for the branching process, these partition functions as a function of $\beta$ have algebraic properties which generalize those that appear in the non-random and $p$-adic environments.
\end{abstract}

{\bf MSC2010:} \msc

{\bf Keywords:} \keywords
\vspace{1cm}

\section{Introduction}

The partition function of a physical system is (more-or-less) the Laplace transform of the distribution of allowable energies in the system. It is a function of the basic parameters of the system, and in many instances thermodynamic potentials for a system can be derived from a closed form for the partition function.

Here we consider systems of particles on the profinite completion of a branching process. Precise definitions will follow, but for now, we may think of the environment in which the particles live as a rooted infinite tree, where we think of a particle as an infinite path (a sequence of descendant vertices) from the root down the tree. In two dimensional ($\R^2$) electrostatics, the energy contribution from two (repelling) like charged particles is proportional to the log of the distance between them, and we investigate an analogous situation on random trees by introducing a natural distance function and then assuming that the total energy of a system of particles is the sum of the contributions from all pairs of particles.

When put in contact with a heat reservoir (and/or particle reservoir) our system can exchange energy (and/or particles) and the physical parameters of our system will include the `inverse temperature' (classically denoted by $\beta$), which is the conjugate variable to energy in the Laplace transform. The situation where the number of particles $N$ is fixed is called the {\em canonical ensemble} and the partition function is a function in the variable $\beta$. Canonical partition functions for idealized systems (like the toy model we are introducing here) often have remarkable analytic properties in $\beta$ and an understanding of, for instance, the singularities can lead to an understanding of quantities of physical interest, for instance phase transitions.

When our system is in contact with a heat and particle reservoir, then there is an additional relevant physical parameter, the {\em fugacity} $t$ which is the conjugate variable to $N$. The variable particle situation is called the {\em grand canonical ensemble}, and the analytic properties of the grand canonical partition function in $\beta$ and $t$ can provide useful information about physically relevant quantities (for instance the expected number of particles in the system). The grand canonical partition function can also be seen as the generating function for the canonical partition functions and so understanding the latter can with the former.

Recent work in $p$-adic electrostatics \cite{MR4218379, MR4419249, MR4048290, MR4365943, MR4586932} has introduced combinatorial methods for recursively evaluating the canonical partition function when particles are constrained to the (infinite) complete $p$-nary tree. The principal observation is a functional equation for the grand canonical partition function, which leads to recursions for the canonical partition function. Among the interesting implications of this is that the canonical partition function is a rational function in $p^{-\beta}$ and the poles of this rational function are highly structured. Another implication discovered from this rational function is that the system is still well-defined for a narrow range of {\em negative} temperatures. The negative temperature regime can be reinterpreted as a collection of attracting particles at a high enough temperature where the thermal energy overwhelms the attraction so that they don't collocate (an infinite energy situation). The abscissa of analyticity of the canonical partition function tells us the critical temperature at which the attraction overcomes the thermal energy.

We will produce similar results in the situation considered here. There is one main difference however; our environment is random. In order to get deterministic partition functions we have to take expectations with respect to the law determining the tree to produce the {\em mean} partition functions. When the tree is formed from a simple branching process (each node having an independent and identically distributed number of children), this expectation becomes tractable and we arrive at functional equations and recursions for the mean grand canonical and canonical partition functions. The analytic properties of these are interesting, and in particular if the possible number of children for any node is finite, the mean canonical partition function is a rational function in $\{ q^{-\beta} : \mathbb P(q) > 0\}$. For instance, if the only possible number of children of any node is 5 or 2, then the canonical partition function is a rational function in $2^{-\beta}$ and $5^{-\beta}$. Branching processes are a well-studied class of processes \cite{MR1991122}.

As a final connection, the random and mean canonical partition functions for our branching process are related to special examples of Igusa zeta function. In their generality Igusa zeta functions are natural integrals involving absolute values of polynomials over non-archimedean spaces \cite{MR1743467}. These arise as Euler factors in certain kinds of global zeta functions, and it is known that if the residue class degree of the non-archimedean space is $q$, then the Igusa zeta function is a function in $q^{-\beta}$ \cite{MR347753, MR404215}. The results here, in which we get rational functions in multiple $q^{-\beta}$ present a possible interesting extension of Igusa zeta functions to a larger class of integrals. Here the polynomial that arises in the $p$-adic electrostatic situation is the absolute Vandermonde determinant. There is not a natural definition of polynomials on trees, and in our case we will generalize the situation by replacing the absolute Vandermonde with a product of pairwise distances between particles on our trees. Integrals with Vandermonde determinants appear all over mathematics \cite{MR2434345} as the Selberg Integral \cite{MR0018287}, in multivariate statistics \cite{MR652932}, volumes of matrix groups, volumes of polynomials \cite{MR1868596, sinclair-2005, pet-sin, MR2145532}, in random matrix theory \cite{mehta:256, MR0173726, sinclair-2007}, as $p$-adic string amplitudes \cite{MR4373375, MR3946489}, etc.

\section{Profinite Completions of Trees}

Let $T$ be an infinite rooted tree with vertex set $V$. We will denote the root $v_0$, and if $w \in V$ is a descendant of $v \in V$ we will write $w | v$ ($w$ descends from $v$). If $w$ is an immediate descendant (child) of $v$, then we will write $w \| v$. We also define $Q(v)$ to be the number of children of $v$, and we will assume that $Q(v)$ is finite for all $v$. This is an example of an infinite Galton-Watson process \cite{Galton-Watson}.

A chain of descendants is a sequence of vertices $\mathbf v = (v_n); n=0,1,2,\ldots$ starting at the root with $v_{n+1} \| v_n$. If the vertex $v$ appears in $\mathbf v$ we will write $\mathbf v | v$ (so, for instance $\mathbf v | v_n$ for all $n$).

The set of descendant chains is the {\em profinite completion} of $T$, and we will denote it $\mathcal T$. We may equip $\mathcal T$ with a $\sigma$-algebra (and topology) by defining for each $v \in V$ the open set
\[
\mathcal T(v) = \{ \mathbf v \in \mathcal T : \mathbf v | v \}.
\]
We then take $\mathcal H$ to be the $\sigma$-algebra generated by $\{\mathcal T(v) : v \in V\}$. The topology generated by $\{\mathcal T(v) : v \in V\}$ is totally disconnected.

We write $\mathbf a(v) = (a_0(v), a_1(v), \ldots, a_{d(v)-1}(v))$ for the ancestor chain of $v$,
\[
  v \|  a_{d(v) - 1}(v) \| \cdots \| a_1(v) \| a_0(v) .
\]
Because $T$ is rooted, we necessarily have $a_0(v) = v_0$. Here $d(v)$ is the {\em generation} or {\em depth} of $v$.

We may then define the measure $\mu$ on $(\mathcal T, \mathcal H)$ by specifying
\[
\mu(\mathcal T(v)) = \prod_{a \in \mathbf a(v)} \frac{1}{Q(a)}.
\]
It is easy to see that $\mu(\mathcal T) = 1$ and hence $\mu$ is a probability measure. We may think of this measure as that determined by equal inheritance among siblings. In this interpretation,  $\mu(\mathcal T(v))$ the amount of inheritance of $v$ to be distributed among the $Q(v)$ children of $v$ equally. It is also easy to see that $\mu$ is a measure; because $\mathcal H$ is totally disconnected, each $\mathcal T(v)$ is the disjoint union of the $\mathcal T(w)$ where $w \| v$, and thus an arbitrary probability measure $\nu$ is essentially an inheritance scheme where $\nu(\mathcal T) = 1$ and for all vertices $v$, $\nu(\mathcal T(v)) = \sum_{w \| v}\nu(\mathcal T(w))$.

\begin{lemma}
\label{ultrametric}
If $\mu$ has no atoms, $\delta: \mathcal T \times \mathcal T \rightarrow [0,1]$, given by
\[
\delta(\mathbf v,\mathbf w) = \inf\{ \mu(\mathcal T(v)) : v \in V, \mathcal T(v) \ni \mathbf v, \mathbf w \}
\]
is an ultrametric. The infimum is over all $v \in V$ for which $\mathcal T(v)$ contains both $\mathbf v$ and $\mathbf w$. If $\mu$ has atoms then $\delta$ is a pseudo-ultrametric.
\end{lemma}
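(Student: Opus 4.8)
The plan is to reduce the statement to two elementary facts about descendant chains. Fix $\mathbf{v},\mathbf{w}\in\mathcal{T}$ and set $A(\mathbf{v},\mathbf{w})=\{v\in V:\mathcal{T}(v)\ni\mathbf{v},\mathbf{w}\}$, the set of common ancestors of the two chains. Since each vertex of a rooted tree has a unique ancestor chain, a vertex lying in both $\mathbf{v}$ and $\mathbf{w}$ forces the two chains to agree up to and including that vertex (compare depths: a common vertex sits at the same position in each chain); hence $A(\mathbf{v},\mathbf{w})$ is a common \emph{initial segment} of $\mathbf{v}$ and $\mathbf{w}$. Also $\mu(\mathcal{T}(\cdot))$ is monotone down the tree, because $w\mid v$ gives $\mathcal{T}(w)\subseteq\mathcal{T}(v)$. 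Together these say: if $\mathbf{v}\neq\mathbf{w}$ then $A(\mathbf{v},\mathbf{w})=\{v_0,\dots,v_k\}$ is finite and $\delta(\mathbf{v},\mathbf{w})=\mu(\mathcal{T}(v_k))=\prod_{a\in\mathbf{a}(v_k)}Q(a)^{-1}$, while if $\mathbf{v}=\mathbf{w}$ then $A(\mathbf{v},\mathbf{v})$ is the whole chain and, since $\mathcal{T}(v_n)\downarrow\{\mathbf{v}\}$ and $\mu$ is finite, continuity from above gives $\delta(\mathbf{v},\mathbf{v})=\lim_n\mu(\mathcal{T}(v_n))=\mu(\{\mathbf{v}\})$. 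Informally, $\delta(\mathbf{v},\mathbf{w})$ is just the $\mu$-measure of the smallest basic open set containing both chains.

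With this in hand the axioms are quick. Symmetry holds because the defining set is symmetric in $\mathbf{v},\mathbf{w}$; the range is $[0,1]$ because $\mu$ is a probability measure and $v_0\in A(\mathbf{v},\mathbf{w})$ with $\mu(\mathcal{T}(v_0))=1$. For distinct chains $\delta(\mathbf{v},\mathbf{w})$ is a finite product of the strictly positive numbers $Q(a)^{-1}$ (each $Q(a)\ge 1$, since $a$ has the next vertex of the chain as a child), so $\delta(\mathbf{v},\mathbf{w})>0$ irrespective of atoms; and $\delta(\mathbf{v},\mathbf{v})=\mu(\{\mathbf{v}\})$ vanishes for every $\mathbf{v}$ exactly when $\mu$ is atomless. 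Hence $\delta(\mathbf{v},\mathbf{w})=0\iff\mathbf{v}=\mathbf{w}$ precisely in the atomless case; in general $\delta$ still satisfies symmetry and the strong triangle inequality below, and $\delta(\mathbf{v},\mathbf{v})=\mu(\{\mathbf{v}\})\le\mu(\mathcal{T}(v_k))=\delta(\mathbf{v},\mathbf{w})$ for every $\mathbf{w}$ since $\mathbf{v}\in\mathcal{T}(v_k)$, so it is a pseudo-ultrametric in the sense of the statement.

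The heart of the lemma is the strong triangle inequality $\delta(\mathbf{v},\mathbf{x})\le\max\{\delta(\mathbf{v},\mathbf{w}),\delta(\mathbf{w},\mathbf{x})\}$. I would use that $A(\mathbf{v},\mathbf{w})$ and $A(\mathbf{w},\mathbf{x})$ are both initial segments of the \emph{same} chain $\mathbf{w}$, hence nested, so their intersection is the shorter of the two and is contained in $A(\mathbf{v},\mathbf{x})$ (a vertex common to all three chains is in particular common to $\mathbf{v}$ and $\mathbf{x}$). Taking $\inf\mu(\mathcal{T}(\cdot))$ over these sets and using monotonicity, the infimum over the shorter segment equals the larger of the two individual infima, whence $\delta(\mathbf{v},\mathbf{x})\le\inf_{A(\mathbf{v},\mathbf{w})\cap A(\mathbf{w},\mathbf{x})}\mu(\mathcal{T}(\cdot))=\max\{\delta(\mathbf{v},\mathbf{w}),\delta(\mathbf{w},\mathbf{x})\}$. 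The argument is uniform in whether any of the three chains coincide: if, say, $\mathbf{w}=\mathbf{x}$, the segment $A(\mathbf{w},\mathbf{x})$ is all of $\mathbf{w}$, and the same inclusion and infimum identity hold using continuity from above as above.

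I do not expect a real obstacle: the lemma is essentially bookkeeping once one sees the two structural facts — common ancestors form an initial segment, and $\mu(\mathcal{T}(\cdot))$ decreases along chains. The hard part, such as it is, lies in two places: the computation of $\delta$ on the diagonal (the one genuinely measure-theoretic step, though immediate from finiteness of $\mu$) and checking that the triangle-inequality manipulation stays valid when two of the chains coincide. As an optional remark, $\mathbf{v}$ is an atom of $\mu$ exactly when $\prod_n Q(v_n)^{-1}>0$, i.e.\ when $Q(v_n)=1$ for all but finitely many $n$ (the chain is eventually a single ray), which pins down precisely when the ultrametric degenerates to a pseudo-ultrametric.
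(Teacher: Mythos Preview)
Your proof is correct. Both you and the paper reduce the lemma to the strong triangle inequality, and both arguments ultimately hinge on the position of the least common ancestors; the paper does a direct two-case split (is the third point $\mathbf z$ inside $\mathcal T(v)$, where $v$ is the least common ancestor of $\mathbf v,\mathbf w$, or not?), while you instead package the same information via the observation that $A(\mathbf v,\mathbf w)$ and $A(\mathbf w,\mathbf x)$ are nested initial segments of $\mathbf w$, whence $A(\mathbf v,\mathbf w)\cap A(\mathbf w,\mathbf x)\subseteq A(\mathbf v,\mathbf x)$ and monotonicity of $\mu(\mathcal T(\cdot))$ along chains does the rest. These are really two presentations of the same idea; your version has the mild advantage of being case-free and of handling coincident chains uniformly. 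You are also more explicit than the paper on the remaining axioms: the paper simply asserts ``we only need to prove the ultrametric inequality,'' whereas you actually verify $\delta(\mathbf v,\mathbf v)=\mu(\{\mathbf v\})$ via continuity from above and explain why identity of indiscernibles holds exactly in the atomless case, which is what distinguishes the ultrametric from the pseudo-ultrametric conclusion.
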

\begin{proof}
We only need to prove that $\delta$ satisfies the ultrametric inequality. That is, if $\mathbf v, \mathbf w, \mathbf z \in \mathcal T,$ then
\[
\delta(\mathbf v, \mathbf w) \leq \max\{\delta(\mathbf v, \mathbf z), \delta(\mathbf z, \mathbf w)\}.
\]
Let $v, v', v'' \in V$ be the relevant least common ancestors such that
\[
\delta(\mathbf v, \mathbf w) = \mu(\mathcal T(v)); \qquad
\delta(\mathbf v, \mathbf z) = \mu(\mathcal T(v')); \qquad
\delta(\mathbf z, \mathbf w) = \mu(\mathcal T(v'')).
\]
There are two cases, the first when $\mathbf z \in \mathcal T(v)$. In this case, $v' | v$ and $v'' | v$ and because $v$ is the least common ancestor of $\mathbf v$ and $\mathbf w$, we must have either $v = v'$ or $v = v''$.

The other case is when $\mathbf z \not \in \mathcal T(v)$. Note that because $v'$ is the least common ancestor $\mathbf v$ and $\mathbf z$, $v | v'$, in which case
\[
\delta(\mathbf v, \mathbf w) = \mu(\mathcal T(v)) \leq \mu(\mathcal T(v')) = \delta(\mathbf v, \mathbf z),
\]
and the lemma follows by symmetry.
\end{proof}
Notice in particular, if $v$ and $w$ are different children of $v_0$ (assuming it has more than one child) and $\mathbf v | v$ and $\mathbf w | w$, then $\delta(\mathbf v, \mathbf w) = 1$. In the inheritance interpretation, $\delta(\mathbf v, \mathbf w)$ is the wealth of the least wealthy common ancestor of $\mathbf v$ and $\mathbf w$.

\includegraphics[width=5in]{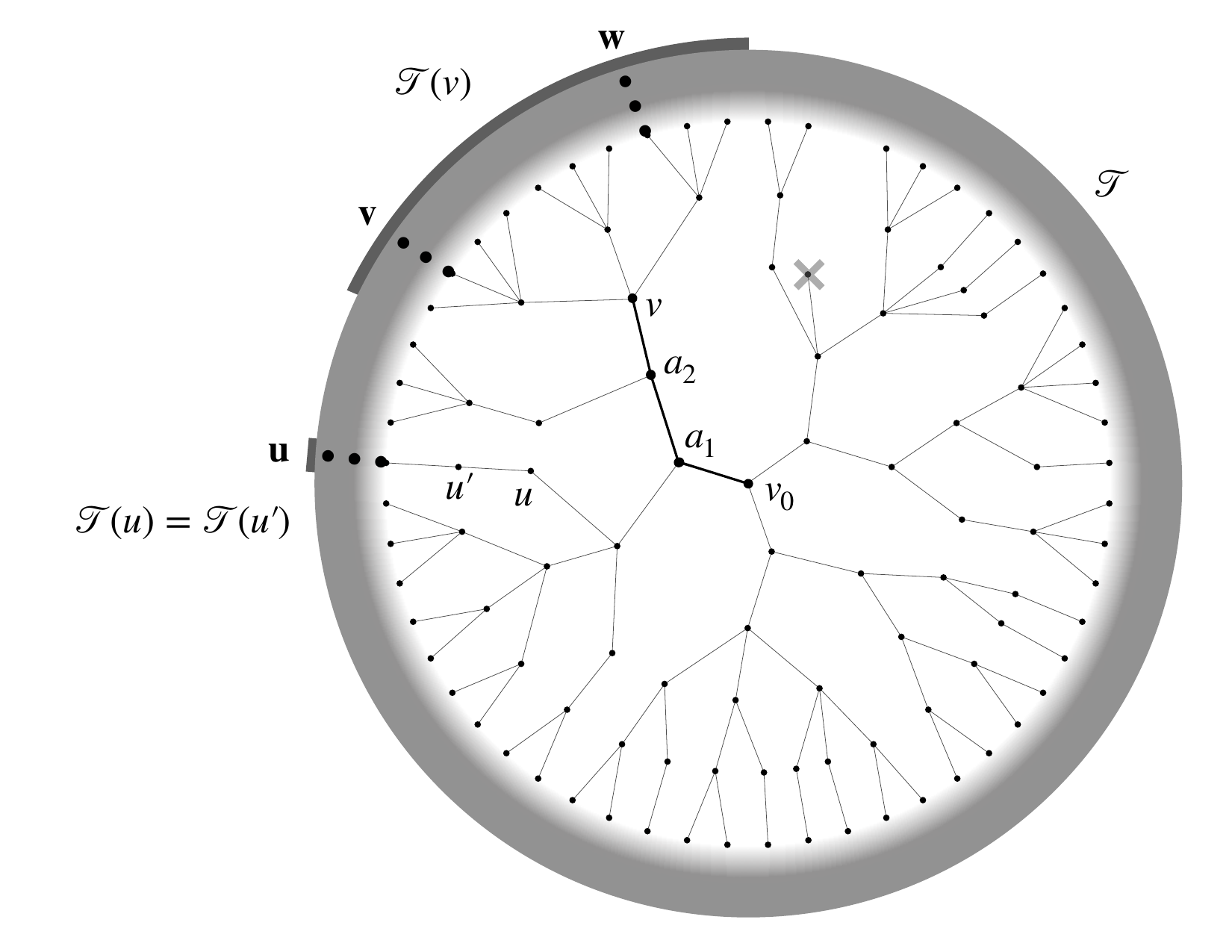}

In the example pictured, we view $\mathcal T$ as the `boundary' of the infinite tree (of which only the first few generations are shown). We will assume that all vertices have at least one (and hence infinitely many) descendant(s) so that vertices like the one crossed out do not appear. If we have two points $\mathbf v, \mathbf w \in \mathcal T$, then the distance between them is the measure of the set of descendants of their least common ancestor, represented here as the part of the boundary labelled $\mathcal T(v)$. Because $Q(v_0) = 3$, $Q(a_1) = Q(a_2) = 2$ that $\delta(\mathbf v, \mathbf w) = \mu(\mathcal T(v)) = 1/12$. Notice also that $\mu(\mathcal T(u)) = \mu(\mathcal T(u')) = 1/18$. It is possible, if all descendants of $u'$ have exactly one child, that
$\delta(\mathbf u, \mathbf u) =\mu\{ \mathbf u \} = \mu(\mathcal T(u))> 0$. In this situation, $\mu$ has an atom, and $\delta$ is not a metric (though it still satisfies the strong triangle inequality). In the situations we will be most interested we will assume that each vertex has positive probability of having more than one child. Thus, with probability one, the measures $\mu$ we will be dealing with have no atoms in $\mathcal T$ and the $\delta$ will be true ultrametrics. When $\mu$ has no atoms, we may replace the tree with that formed by contracting all edges between parents and only-children. There is an obvious bijection between $\mathcal T$ and the profinite completion of this new contracted tree, and the measure and metrics are invariant under this bijection. Thus, we may assume that all trees of interest have the property that each vertex has at least two children. Alternately, we may insert ancestor chains of the form $\multimapdotboth \cdots \multimapdotboth$, of arbitrary but finite length, between any pair of adjacent vertices without functionally changing $\mu$ and $\delta$. By taking scaling limits this allows us to consider measures and metrics on the profinite completion of trees formed from branching processes where the branching happens at arbitrary real (deterministic or random) times.

In the special case where $Q(v) = q$ for all $v \in V$, $\mathcal T$ is the profinite completion of the (infinite) rooted regular $q$-nary tree. When $q=p$ is a rational prime, $\mathcal T$ can be identified with the $p$-adic integers $\Z_p$ and in this case the measure $\mu$ is Haar measure on $\Z_p$ and the distance $\delta$ is that inherited from the $p$-adic absolute value. A quick reference for salient features of the $p$-adic numbers is \cite[\S3.3]{MR4175370}.When $q$ is a prime power, similar identifications can be made with the ring of integers of nonarchimedean completions of number fields (see \cite[Part I]{MR554237}). The nonarchimedean context  \cite{MR4419249, MR4326614} is the deterministic precursor to the results presented here and after a few definitions we will return to it to see what theorems may be generalized to the random situation.

\section{Electrostatics on Profinite Completions}

In classical two-dimensional electrostatics, the potential energy of two like charged particles is assumed to be proportional to the (negative) logarithm of the distance between them. A complete reference for the classical two-dimensional situation is \cite{forrester-book}. Generalizing this setup to the present situation, we define the potential energy of a system of two particles in $\mathbf v, \mathbf w \in \mathcal T$ by
\[
E(\mathbf v, \mathbf w) = -\log \delta(\mathbf v, \mathbf w).
\]
If we have $N$ like charged particles located at the coordinates of $\vec{\mathbf v} = (\mathbf v_1, \ldots, \mathbf v_N)$ then the total potential energy caused by the interaction between particles is the sum of the potential energies of all pairs of particles. That is,
\[
E(\vec{\mathbf v}) = -\sum_{m < n} \log \delta(\mathbf v_n, \mathbf v_m).
\]
The fact that $E(\vec{\mathbf v}) = +\infty$ if any two coordinates of $\vec{\mathbf v}$ are equal, means that the particles repel. This is sometimes called a {\em plasma} in analogy with a gas of electrons. It is sometimes necessary to introduce a background potential to keep the system from minimizing the energy by allowing all distances $\delta(\mathbf v_n, \mathbf v_m)$ to go to infinity (an uninteresting situation) however, in our case $\delta \leq 1$ and so no background potential is necessary to prevent this.

Under the assumptions of Boltzmann statistics \cite{MR116523}, if we put our system in contact with a heat bath (energy reservoir) at inverse temperature $\beta$, the probability density of the system being in state $\vec{\mathbf v}$ is given by
\[
\frac{1}{N! Z(N,\beta)} e^{-\beta E(\vec{\mathbf v})} d\mu^N(\vec{\mathbf v}) = \frac{1}{N! Z(N,\beta)}\bigg\{ \prod_{m<n}^N \delta(\mathbf v_n ,\mathbf v_m)^{\beta} \bigg\} \, d\mu^N(\vec{\mathbf v}),
\]
where $\mu^N$ is the $N$-fold product measure of $\mu$ on $\mathcal T^N$ and $Z(N,\beta)$ is the {\em canonical partition function} on $\mathcal T$, given explicitly by
\[
Z(N,\beta) = \frac{1}{N!} \int_{\mathcal T^N} \bigg\{ \prod_{m<n}^N \delta(\mathbf v_n ,\mathbf v_m)^{\beta} \bigg\} \, d\mu^N(\vec{\mathbf v}).
\]
The mathematical determination of the partition function (in terms of special functions, a recursion, etc) is often the first step to `solving' the system.

The {\em grand} canonical ensemble arises when we allow the number of particles to vary; that is, when the system is put in contact with an energy {\em and} particle reservoir. In this situation there is a parameter $t$, the {\em fugacity}, which arises as a sort of energy-cost-per-particle, and Boltzmann statistics in this situation tell us that the conditional probability density of our system being in state $\vec{\mathbf v}$, conditioned on there being $N$ particles is
\[
\frac{1}{Z(t, \beta)} \frac{t^N}{N!} \bigg\{ \prod_{m<n}^N \delta(\mathbf v_n ,\mathbf v_m)^{\beta} \bigg\} \, d\mu^N(\vec{\mathbf v}),
\]
where the {\em grand} canonical partition function $Z(t,\beta)$ is given by
\[
Z(t,\beta) = \sum_{N=0}^{\infty} Z(N,\beta) t^N.
\]
Notice that $Z(t, \beta)$ can also be arrived at as the generating function for the $Z(N, \beta)$.
We remark that all quantities that appear in this section are dependent on $\mathcal T$. This is important, because when we make $\mathcal T$ random, all of these quantities become random too.

\subsection{The $p$-adic plasma}

To provide context for our results in the random situation, we look at what sorts of results are known in the $p$-adic case and how these might be generalized. In the language we have developed here the results in this section are valid for all regular $q$-nary trees---for any integer $q > 1$---though strictly speaking the identification with non-archimidean completions of a number field (the generalized $p$-adic case) is only valid when $q$ is a prime power.

Let us denote the canonical and grand canonical partition functions in this situation by $Z_q(N, \beta)$ and $Z_q(t, \beta)$.

First discovered was a recursion in $N$ for $Z_q(N, \beta)$ \cite{sinclair2020nonarchimedean}. This recursion, is a bit messy, but can be reported as
\[
Z_q(N, \beta) = \sum_{N_1 + \cdots + N_q = N} \prod_{\ell=1}^q q^{-\beta{N_{\ell} \choose 2} - N_{\ell}} Z_q(N_{\ell}, \beta).
\]
The sum is over all partitions of $N$ into non-negative integers $N_1, \ldots, N_q$. It should be remarked that $Z_N$ appears on both sides of this equation, but with different coefficients, so this leads to an honest recursion for $Z_q(N, \beta)$.

It is not too hard to use this recursion to show that the grand canonical partition functions satisfies a functional equation with a related generating functions. Specifically, if we define
\[
F_q(t, \beta) = \sum_{N=0}^{\infty} Z(N, \beta) q^{-\beta{N \choose 2}} t^N \qq{then}
Z_q(t, \beta) = F_q(t/q, \beta)^q.
\]
This is the ``$q$-power identity". By taking logarithmic derivatives, the $q$-power identity produces a quadratic recursion for $Z_q(N, \beta)$ given by
\[
\sum_{n=0}^N \left(\frac{N}{q+1} - n)\right) q^{-\beta {n \choose 2}} Z_q(n, \beta) Z_q(N-n, \beta) = 0.
\]

The rest of the paper will be dedicted to generalizing these three results to random profinite completions of branching processes.

\section{The Canonical Ensemble on a Random Profinite Completion}

Let $\mathcal T$ to be the profinite completion of the tree $T$ given by a random branching process. If we denote the set of vertices of $T$ by $V_T$, then our assumption is that the $\{Q(v) : v \in V_T\}$ are independent and identically distributed. Taking $Q = Q(v_0)$ as our canonical representative of this distribution, the law of $\mathcal T$ is then completely determined by
\[
p_q = \mathbb P\{ Q = q \}; \quad q=0,1,\ldots.
\]
We will assume the $p_0 = 0$, $p_1 \neq 1$. This is equivalent to each parent having at least one child, and positive probability of having more than one child. At times it will be useful to assume that $Q$ has finite mean (or perhaps, even more strongly, be bounded). We will call $\mathcal T$ the {\em profinite completion of the branching process with law determined by} $Q$. 

Because $\mathcal T$ is random, $Z(N,\beta)$ is a random function and we therefore introduce the deterministic {\em mean} canonical partition function, by
\[
\overline Z(N,\beta) =  \mathbb E[Z(N,\beta)] = \mathbb E\bigg[\frac{1}{N!} \int_{\mathcal T^N} \bigg\{ \prod_{m<n}^N \delta(\mathbf v_n ,\mathbf v_m)^{\beta}\bigg\} \, d\mu^N(\vec{\mathbf v})\bigg].
\]
The expectation is taken over all $\mathcal T$ with law specified by $Q$. In some cases we write $\overline Z_N(\beta)$ for $\overline Z(N,\beta)$.

\begin{lemma}
\[
\mathbb E\bigg[Z_N(\beta) \bigg| Q = q\bigg] = \sum_{N_1 + \cdots + N_q = N} \prod_{k=1}^q  \mathbb E\bigg[\frac{1}{N_k!} \int_{\mathcal T(w_k)^{N_k}} \bigg\{ \prod_{m<n}^{N_k} \delta(\mathbf v_n ,\mathbf v_m)^{\beta}\bigg\} \, d\mu^{N_k}(\vec{\mathbf v}) \bigg| Q = q
\bigg].
\]
\end{lemma}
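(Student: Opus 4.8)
The plan is to condition on the event $\{Q = q\}$, split $\mathcal T$ into the $q$ subtrees hanging off the children of the root, observe that particles sitting in distinct subtrees do not interact, and then invoke the independence of the subtrees that is built into the branching law. Everything will reduce to a geometric collapse of cross‑subtree distances plus multinomial bookkeeping, and then one application of conditional independence.

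First I would work pathwise on $\{Q = q\}$, where the root $v_0$ has children $w_1,\ldots,w_q$ and $\mathcal T = \mathcal T(w_1)\sqcup\cdots\sqcup\mathcal T(w_q)$ is a measurable partition with $\mu(\mathcal T(w_k)) = 1/Q(v_0) = 1/q$. The key geometric point (already noted just after the ultrametric lemma) is that if $\mathbf v\in\mathcal T(w_k)$ and $\mathbf w\in\mathcal T(w_j)$ with $j\neq k$, then the only vertex whose subtree contains both points is $v_0$, so
\[
\delta(\mathbf v,\mathbf w) = \mu(\mathcal T(v_0)) = \mu(\mathcal T) = 1,
\]
hence $\delta(\mathbf v,\mathbf w)^{\beta} = 1$. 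Therefore, for $\vec{\mathbf v} = (\mathbf v_1,\ldots,\mathbf v_N)\in\mathcal T^N$, recording which subtree each particle lies in by a map $f\colon\{1,\ldots,N\}\to\{1,\ldots,q\}$ (so $\mathbf v_n\in\mathcal T(w_{f(n)})$), the interaction factor collapses to a product over same‑subtree pairs,
\[
\prod_{m<n}^N \delta(\mathbf v_n,\mathbf v_m)^{\beta} = \prod_{k=1}^q \prod_{\substack{m<n \\ f(m) = f(n) = k}} \delta(\mathbf v_n,\mathbf v_m)^{\beta}.
\]
Next I would push this through the integral: decomposing $\mathcal T^N = \bigsqcup_f \prod_{k=1}^q \mathcal T(w_k)^{f^{-1}(k)}$ and applying Tonelli (legitimate since $\beta>0$ makes the integrand nonnegative and $\delta\le 1$ keeps it bounded by $1$), the integral over the $f$‑block factors as $\prod_{k=1}^q \int_{\mathcal T(w_k)^{N_k}} \prod_{m<n}^{N_k}\delta(\mathbf v_n,\mathbf v_m)^{\beta}\,d\mu^{N_k}$, with $N_k = |f^{-1}(k)|$; this depends on $f$ only through the tuple $(N_1,\ldots,N_q)$. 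Since exactly $N!/(N_1!\cdots N_q!)$ maps $f$ realize a prescribed tuple, summing over $f$ and dividing by $N!$ gives the identity, valid on $\{Q = q\}$,
\[
Z_N(\beta) = \sum_{N_1+\cdots+N_q = N} \prod_{k=1}^q \frac{1}{N_k!}\int_{\mathcal T(w_k)^{N_k}}\bigg\{\prod_{m<n}^{N_k}\delta(\mathbf v_n,\mathbf v_m)^{\beta}\bigg\}\,d\mu^{N_k}(\vec{\mathbf v}).
\]

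Finally I would take $\mathbb E[\,\cdot\mid Q = q]$ of both sides. The right‑hand side is a finite sum, so the conditional expectation passes inside, and for each fixed tuple I must turn $\mathbb E[\prod_k Y_k\mid Q=q]$ into $\prod_k\mathbb E[Y_k\mid Q=q]$, where $Y_k$ is the $k$‑th factor. This is where the branching structure does the work: since $\{Q(v)\}$ is i.i.d., conditionally on $\{Q=q\}$ the subtrees rooted at $w_1,\ldots,w_q$ are independent, each an independent copy of the whole branching process (carrying, through the deterministic scale $1/q$, a rescaled copy of $(\mu,\delta)$); as $Y_k$ is a function of the subtree below $w_k$ alone, the $Y_k$ are conditionally independent and the product formula follows. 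The step I expect to be the main obstacle is precisely this last one: one must pin down ``the subtree below $w_k$'' as a sub‑$\sigma$‑field of the randomness, check that $Y_k$ is measurable with respect to it, and confirm the conditional expectations are finite — the uniform bounds $0\le Y_k\le \mu(\mathcal T(w_k))^{N_k}/N_k!\le 1$ handle integrability and also justify every interchange of expectation and integral used along the way, while the geometric collapse and the multinomial count are routine.
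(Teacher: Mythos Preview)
Your proposal is correct and follows essentially the same route as the paper: condition on $\{Q=q\}$, partition $\mathcal T^N$ by occupation numbers of the subtrees $\mathcal T(w_k)$, use that cross-subtree distances equal $1$ so the Boltzmann factor factorizes, collect the multinomial coefficient, and then pass the expectation through the product via conditional independence of the subtrees. Your write-up is in fact more careful than the paper's, which compresses the pathwise Tonelli step and the conditional-independence step into a single invocation of ``Fubini's Theorem''; your explicit separation of these two steps, together with the integrability bounds $0\le Y_k\le 1$, makes the argument cleaner.
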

\begin{proof}
The event $\{Q = q \}$ is that on which $v_0$ has $q$ children, which we can label $w_1, \ldots, w_q$. If $\mathbf v \in \{ Q = q \}$ then $\mathbf v \in \mathcal T(w_k)$ for exactly one of $w_1, \ldots, w_q$. We can partition the elements of $\vec{\mathbf v}$ by which of the $\mathcal T(w_k)$ they fall into. That is, we can partition the state space based on occupation numbers $N_1, N_2, \ldots, N_q$ of $\mathcal T(w_1), \mathcal T(w_2), \ldots \mathcal T(w_q)$. By necessity $N_1 + N_2 + \cdots + N_q = N$ and moreover, a general state in $\mathcal T(w_1)^{N_1} \times \cdots \mathcal T(w_q)^{N_q}$ describes ${N \choose N_1, \ldots N_q}$ states in $\mathcal T^N$. Finally, if $w_j \neq w_k$ and $\mathbf v \in \mathcal T(w_j)$ and $\mathbf w \in \mathcal T(w_k)$, then $\delta(\mathbf v, \mathbf w) = 1$. In which case Fubini's Theorem allows us to exchange the product and the expectation (on the domain of analyticity in $\beta$).
\end{proof}

\begin{lemma}
  \label{lemma:1}
\[
\mathbb E\bigg[\frac{1}{N_k!}
\int_{\mathcal T(w_k)^{N_k}} \bigg\{ \prod_{m<n}^{N_k} \delta(\mathbf v_n ,\mathbf v_m)^{\beta}\bigg\} \, d\mu^{N_k}(\vec{\mathbf v}) \bigg| Q = q
\bigg] = q^{-N_k -\beta{N_k \choose 2}} \overline Z_{N_k}(\beta).
\]
\end{lemma}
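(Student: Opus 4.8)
The plan is to realize $\mathcal T(w_k)$, on the event $\{Q = q\}$, as a rescaled copy of the profinite completion of the subtree of $T$ rooted at $w_k$, pull out the deterministic scaling factor $q^{-N_k - \beta{N_k \choose 2}}$, and then use the self-similarity of the branching process to recognize what remains as $\overline Z_{N_k}(\beta)$.

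First I would record two scaling identities valid on $\{Q = q\}$. Let $\widetilde{\mathcal T}$ denote the profinite completion of the subtree of $T$ rooted at $w_k$ (so $w_k$ plays the role of the root), equipped with its canonical probability measure $\widetilde\mu$ and the associated ultrametric $\widetilde\delta$ of Lemma \ref{ultrametric}. Since $w_k \| v_0$ and the unique ancestor of $w_k$ in $T$ is $v_0$ with $Q(v_0) = q$, every vertex $v$ with $v \mid w_k$ has ancestor chain beginning $v_0, w_k, \dots$, so $\mu(\mathcal T(v)) = q^{-1} \widetilde\mu(\widetilde{\mathcal T}(v))$; in particular the restriction of $\mu$ to $\mathcal T(w_k)$ equals $q^{-1}\widetilde\mu$. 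Taking the infimum of $\mu(\mathcal T(v))$ over the common ancestors $v$ of two points of $\mathcal T(w_k)$ (all of which satisfy $v \mid w_k$ or $v = w_k$) then gives $\delta(\mathbf v, \mathbf w) = q^{-1}\widetilde\delta(\mathbf v, \mathbf w)$ for $\mathbf v, \mathbf w \in \mathcal T(w_k)$.

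Next I would substitute these into the integral. Using $d\mu^{N_k} = q^{-N_k}\, d\widetilde\mu^{N_k}$ on $\mathcal T(w_k)^{N_k}$ and $\prod_{m<n}^{N_k}\delta(\mathbf v_n, \mathbf v_m)^\beta = q^{-\beta {N_k \choose 2}}\prod_{m<n}^{N_k}\widetilde\delta(\mathbf v_n,\mathbf v_m)^\beta$ (one factor $q^{-\beta}$ for each of the ${N_k \choose 2}$ pairs), the scalar $q^{-N_k - \beta {N_k \choose 2}}$---which depends only on $q$ and hence is constant on $\{Q = q\}$---comes out of the integral, leaving $N_k!$ times the canonical partition function $\widetilde Z_{N_k}(\beta)$ of $(\widetilde{\mathcal T}, \widetilde\mu, \widetilde\delta)$. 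For $\beta > 0$ the integrand is bounded by $1$, so the integral is finite and these rearrangements (and the exchange with $\mathbb E[\,\cdot \mid Q = q]$) are legitimate; for other $\beta$ one works on the common domain of analyticity, as elsewhere in the paper. Thus the left side of the claim equals $q^{-N_k - \beta {N_k \choose 2}}\, \mathbb E[\widetilde Z_{N_k}(\beta) \mid Q = q]$.

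Finally I would invoke the defining property of a Galton-Watson process: conditionally on $\{Q = q\}$, the subtree rooted at $w_k$ is a copy of the branching process with law $Q$, whose distribution in particular does not depend on $q$ and coincides with the unconditional law of $T$. Hence $\widetilde Z_{N_k}(\beta)$, given $\{Q = q\}$, has the same distribution as $Z_{N_k}(\beta)$ on $\mathcal T$, so $\mathbb E[\widetilde Z_{N_k}(\beta)\mid Q = q] = \mathbb E[Z_{N_k}(\beta)] = \overline Z_{N_k}(\beta)$, which gives the claim. I expect the main obstacle to be purely the careful bookkeeping in the scaling step---correctly producing the exponent ${N_k \choose 2}$ on $q^{-\beta}$ and the measure factor $q^{-N_k}$---together with the routine convergence/Fubini justification; the self-similarity input is conceptually the heart of the argument but is entirely standard for branching processes.
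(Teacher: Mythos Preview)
Your proposal is correct and follows essentially the same argument as the paper: introduce the intrinsic probability measure and ultrametric on the subtree rooted at $w_k$, observe the scaling relations $\mu = q^{-1}\widetilde\mu$ and $\delta = q^{-1}\widetilde\delta$ on $\mathcal T(w_k)$ to extract the factor $q^{-N_k-\beta\binom{N_k}{2}}$, and then invoke the Galton--Watson self-similarity (the subtree at $w_k$ has the unconditional law of $\mathcal T$, independently of $q$) to identify the remaining conditional expectation with $\overline Z_{N_k}(\beta)$. Your version is in fact more explicit than the paper's about the bookkeeping and the justification for exchanging integral and expectation, but the route is the same.
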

\begin{proof}
This follows immediately by observing that if we define $\mu_k$ and $\delta_k$ to be the measure and distance function on $\mathcal T(w_k)$ (rooted at $w_k$), then
\[
\delta_k(\mathbf v_n ,\mathbf v_m) = \frac{1}{q} \delta(\mathbf v_n ,\mathbf v_m) \qq{and} \mu_k(dv) = \frac{1}{q} \mu(dv).
\]
and because $\mathcal T(w_k)$ has the same law as $\mathcal T$,
\[
\mathbb E\bigg[\frac{1}{N_k!}
\int_{\mathcal T(w_k)^{N_k}} \bigg\{ \prod_{m<n}^{N_k} \delta(\mathbf v_n ,\mathbf v_m)^{\beta}\bigg\} \, d\mu^{N_k}(\vec{\mathbf v}) \bigg| Q = q
\bigg] = q^{-N_k -\beta{N_k \choose 2}} \E\bigg[\frac{1}{N_k!}
\int_{\mathcal T^{N_k}} \bigg\{ \prod_{m<n}^{N_k} \delta(\mathbf v_n ,\mathbf v_m)^{\beta}\bigg\} \, d\mu^{N_k}(\vec{\mathbf v}) \bigg],
\]
and the lemma follows.
\end{proof}
This provides a recursion for $\overline Z_N$, since we can isolate the terms on the right-hand-side which depend on $\overline Z_N$ and all remaining terms depend on $\overline Z_{N_k}$ where the $N_k$ are necessarily less than $N$. This leads to the following theorem.
\begin{thm}
$\{\overline Z_N(\beta) : N \in \mathbb N\}$ satisfy the recursion
  \[
  \overline Z_N(\beta) = \left(1 - \mathbb E\left[ Q^{1 - N - \beta{N \choose 2}}\right]
  \right)^{-1} \sum_{q} p_q \sum_{N_1 + \cdots + N_q = N \atop N_k \neq N} \prod_{k=1}^q q^{-N_k -\beta{N_k \choose 2}} \overline Z_{N_k}(\beta).
  \]
  where on the right hand side the inner sum is over $q$-tuples of non-negative integers $(N_1, \ldots, N_q)$ which sum to $N$, but for which none of the $N_k = N$; with initial condition $\overline Z_0(\beta) = \overline Z_1(\beta) = 1$.
\end{thm}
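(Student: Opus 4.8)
The plan is to chain together the two preceding lemmas, average over the law of $Q$, and then solve the resulting linear relation for $\overline Z_N(\beta)$.

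First, substituting Lemma~\ref{lemma:1} into the first lemma of this section collapses the conditional expectation to
\[
\mathbb E[Z_N(\beta) \mid Q = q] = \sum_{N_1 + \cdots + N_q = N} \prod_{k=1}^q q^{-N_k - \beta{N_k \choose 2}}\, \overline Z_{N_k}(\beta),
\]
and the tower property $\overline Z_N(\beta) = \mathbb E[\,\mathbb E[Z_N(\beta)\mid Q]\,] = \sum_q p_q\,\mathbb E[Z_N(\beta)\mid Q=q]$ then gives the unreduced identity
\[
\overline Z_N(\beta) = \sum_q p_q \sum_{N_1 + \cdots + N_q = N} \prod_{k=1}^q q^{-N_k - \beta{N_k \choose 2}}\, \overline Z_{N_k}(\beta).
\]
For $\beta > 0$ one has $0 \le \delta \le 1$, hence $Z_N(\beta) \le 1/N!$ deterministically, so every expectation in sight is finite (uniformly in $q$) and all interchanges of sum, integral and expectation are harmless; the same bound applied to the empty products at $N = 0,1$ also yields the stated initial conditions $\overline Z_0(\beta) = \overline Z_1(\beta) = 1$.

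Next I would isolate the ``diagonal'' compositions. Fix $N \ge 2$. A composition $N_1 + \cdots + N_q = N$ can have at most one part equal to $N$, and if $N_k = N$ then every other part vanishes; there are exactly $q$ such compositions, and because $\overline Z_0(\beta) = 1$ and ${0 \choose 2} = 0$ each contributes $q^{-N - \beta{N \choose 2}}\overline Z_N(\beta)$, for a total of $q\cdot q^{-N-\beta{N \choose 2}}\overline Z_N(\beta)$. Pulling these out of the sum over $q$ and using $\sum_q p_q\,q\,q^{-N-\beta{N \choose 2}} = \mathbb E[Q^{1-N-\beta{N \choose 2}}]$ rewrites the identity above as
\[
\overline Z_N(\beta) = \mathbb E\!\left[Q^{1-N-\beta{N \choose 2}}\right]\overline Z_N(\beta) + \sum_q p_q \sum_{N_1+\cdots+N_q=N \atop N_k \ne N}\prod_{k=1}^q q^{-N_k-\beta{N_k \choose 2}}\,\overline Z_{N_k}(\beta).
\]
Every composition surviving in the last sum has all parts strictly less than $N$, so solving for $\overline Z_N(\beta)$ produces an honest recursion, and dividing by $1 - \mathbb E[Q^{1-N-\beta{N \choose 2}}]$ gives precisely the asserted formula.

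The only point requiring care --- and the one I expect to be the main obstacle, though a mild one --- is that the prefactor $1 - \mathbb E[Q^{1-N-\beta{N \choose 2}}]$ is nonzero. For $N \ge 2$ and $\beta > 0$ the exponent $1 - N - \beta{N \choose 2}$ is strictly negative, and since $p_0 = 0$ forces $Q \ge 1$ almost surely, $Q^{1-N-\beta{N \choose 2}} \le 1$ with equality only on $\{Q = 1\}$; as $p_1 \ne 1$, this event has probability strictly less than $1$, so $\mathbb E[Q^{1-N-\beta{N \choose 2}}] < 1$ and the prefactor is positive (in particular the expectation is automatically finite, with no moment hypothesis on $Q$ needed). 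It is exactly the degeneration of this exponent at $N = 0$ and $N = 1$ --- where the prefactor would be $1 - \mathbb E[Q]$, possibly non-positive, or $1 - 1 = 0$ --- that forces those two values to be supplied as initial conditions rather than generated by the recursion.
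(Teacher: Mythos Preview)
Your argument is correct and follows exactly the route the paper takes: condition on $Q$, apply the two preceding lemmas to obtain the unreduced identity, then separate off the $q$ compositions with a part equal to $N$ and solve. The paper compresses all of this into ``the theorem follows from Lemma~\ref{lemma:1} upon solving for $\overline Z_N(\beta)$''; your version supplies the details the paper omits, in particular the explicit count of the diagonal compositions and the verification that $1 - \mathbb E[Q^{1-N-\beta{N\choose 2}}] > 0$ for $N\ge 2$, neither of which the paper spells out.
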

Note that, we can expand the expectation so that
\[
\overline Z_N(\beta) =   \left(1 - \mathbb E\left[ Q^{1 - N - \beta{N \choose 2}}\right] \right)^{-1} \E\bigg[ \sum_{N_1 + \cdots + N_Q = N \atop N_k \neq N} \prod_{k \leq Q} Q^{-N_k - \beta{N_k \choose 2}} \overline Z_{N_k}(\beta) \bigg].
\]
\begin{proof}
  We may write $\overline Z_N(\beta)$ as a sum over the events $\{Q = q \}$. Specifically,
  \[
    \overline Z_N(\beta) =  \sum_{q} p_q \mathbb E\bigg[\frac{1}{N!} \int_{\mathcal T^N} \bigg\{ \prod_{m<n}^N \delta(\mathbf v_n ,\mathbf v_m)^{\beta}\bigg\} \, d\mu^N(\vec{\mathbf v}) \bigg| Q = q\bigg],
  \]
  and the theorem follows from Lemma~\ref{lemma:1} upon solving for $\overline Z_N(\beta)$.
\end{proof}

\begin{cor}
Suppose there is an integer $K$ so that $Q \leq K$, then $\overline Z_N(\beta)$ is a rational function in $\Q(p_1, p_2, \ldots, p_K, 2^{-\beta}, \ldots, K^{-\beta})$. More specifically, in this situation, $\overline Z_N(\beta)$ is a rational function in $\Q(p_k, k^{-\beta} : p_k \neq 0)$.
\end{cor}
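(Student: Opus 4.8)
\emph{Proof proposal.} The plan is to argue by induction on $N$, feeding the recursion of the theorem above. The base cases $N=0$ and $N=1$ are trivial, since $\overline Z_0(\beta) = \overline Z_1(\beta) = 1 \in \Q$. So fix $N \ge 2$ and assume $\overline Z_M(\beta)$ lies in the stated field for every $0 \le M < N$.

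For the inductive step I would inspect the two factors appearing in the recursion. In the double sum $\sum_q p_q \sum_{N_1 + \cdots + N_q = N,\ N_k \neq N} \prod_{k=1}^q q^{-N_k - \beta{N_k \choose 2}} \overline Z_{N_k}(\beta)$, the hypothesis $Q \le K$ restricts $q$ to $\{1, \ldots, K\}$, so this is a finite sum of finite products. In each summand the coefficient $q^{-N_k}$ is rational, the factor $q^{-\beta{N_k \choose 2}} = (q^{-\beta})^{{N_k \choose 2}}$ is a monomial in $q^{-\beta}$, and the requirement that no $N_k$ equals $N$, together with $N_1 + \cdots + N_q = N$, forces $N_k \le N-1$ for every index $k$ occurring; hence $\overline Z_{N_k}(\beta)$ lies in the field by the inductive hypothesis. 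As a field is closed under addition and multiplication, the whole double sum lies in $\Q(p_1, \ldots, p_K, 2^{-\beta}, \ldots, K^{-\beta})$.

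It remains to treat the prefactor $\bigl(1 - \mathbb{E}[Q^{1 - N - \beta{N \choose 2}}]\bigr)^{-1}$. Expanding the expectation gives $\mathbb{E}[Q^{1-N-\beta{N \choose 2}}] = \sum_{q=1}^K p_q\, q^{1-N}(q^{-\beta})^{{N \choose 2}}$, again a polynomial in the $p_q$ and the $q^{-\beta}$, so $1 - \mathbb{E}[Q^{1-N-\beta{N\choose 2}}]$ is such a polynomial; to know its reciprocal lies in the field one needs only that this polynomial is nonzero. This is the single delicate point, and it is exactly where the standing hypothesis $p_1 \ne 1$ enters: because $N \ge 2$ we have ${N \choose 2} \ge 1$, so sending $q^{-\beta} \to 0$ for all $q \ge 2$ (say $\beta \to +\infty$ through the reals, or, in the formal reading, setting the prime indeterminates to $0$) annihilates every term with $q \ge 2$ and leaves $1 - p_1 \ne 0$. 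Hence the denominator is a nonzero polynomial, its inverse is a genuine field element, and the product of the two factors shows $\overline Z_N(\beta) \in \Q(p_1, \ldots, p_K, 2^{-\beta}, \ldots, K^{-\beta})$, closing the induction.

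Finally, for the sharper assertion, note that $p_q$ multiplies every term in which $q$ appears, both in the double sum and inside the expectation, so any $q$ with $p_q = 0$ drops out entirely; consequently only the symbols $q^{-\beta}$ with $p_q \ne 0$ ever occur (and a composite $q^{-\beta}$ is recorded as a power of itself, not decomposed through smaller primes), giving $\overline Z_N(\beta) \in \Q(p_k,\, k^{-\beta} : p_k \ne 0)$. The main obstacle here is not computational but precisely the verification that the recursion's denominator does not vanish, which is handled above by the limiting argument and the hypothesis $p_1 \ne 1$.
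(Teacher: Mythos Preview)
Your argument is correct and follows essentially the same approach as the paper: induct on $N$ and observe that $\mathbb{E}[Q^{1-N-\beta{N\choose 2}}]$ is a polynomial in the stated variables. You are in fact more careful than the paper's one-line proof, which does not explicitly address the non-vanishing of the denominator $1 - \mathbb{E}[Q^{1-N-\beta{N\choose 2}}]$; your limiting argument using the standing hypothesis $p_1 \ne 1$ fills this gap cleanly.
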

\begin{proof}
Induct on $N$, and note that $\mathbb E[ Q^{1 - N - \beta{N \choose 2}}]$ is a polynomial in $\Q[p_k, k^{-\beta} : p_k \neq 0]$.
\end{proof}

\begin{ex}The first non-trivial example is when $Q$ takes two possible values. To connect this the partition functions arising from electrostatics on $p$-adic fields studied in \cite{sinclair2020nonarchimedean}, consider the case where
\[
\mathbb P\{Q = q\} = p, \qquad  \mathbb P\{Q = 1\} = (1-p).
\]
In this situation,
\[
\mathbb E\left[ Q^{1 - N - \beta{N \choose 2}}\right] = (1-p) + p q^{1 - N - \beta{N \choose 2}}.
\]
Notice also that, when $q = 1$, the summand is 0, and it follows that,
\[
\overline Z_N(\beta) = \left(1 - q^{1 - N - \beta{N \choose 2}}\right)^{-1} \sum_{N_1 + \cdots + N_q = N \atop N_k \neq N} \prod_{k=1}^q q^{-N_k -\beta{N_k \choose 2}} \overline Z_{N_k}(\beta),
\]
which, when $q$ is a rational prime, is the same recursion (with the same initial conditions) satisfied by
\[
\frac{1}{N!}\int_{\Z_q^N} \bigg\{ \prod_{m<n} |x_n - x_m|_q^{\beta} \bigg\} \, d\mathbf x,
\]
where $d\mathbf x$ represents integration with respect to Haar probability measure on $\Z_q^N$. Thus in this example,
\[
\overline Z_N(\beta) = \frac{1}{N!}\int_{\Z_q^N} \bigg\{ \prod_{m<n} |x_n - x_m|_q^{\beta} \bigg\} \, d\mathbf x,
\]
which is the canonical partition function in the $p$-adic situation.
\end{ex}

\section{Quadratic Recurrences}

In this section we are going to consider the rooted tree formed by connecting two infinite rooted trees $\mathcal T$ and $\mathcal P$ together by adding an edge from the root $v_0$ of one copy to the root of the other $w_0$. Eventually we will view $\mathcal T$ and $\mathcal P$ as being independent copies of a random tree with the same law determined by $Q = Q(v_0)$. For now we assume that $v_0$ has $q$ children in $\mathcal T$. Connecting $\mathcal P$ to $v_0$ essentially adds $w_0$ as a child of $v_0$. Let us denote the profinite completion of this new tree by $\mathcal U$. We will use $\mu_{\mathcal U}, \mu_{\mathcal T}, \mu_{\mathcal P}$ and
to represent the measures $\mathcal U, \mathcal T$ and $\mathcal P$ respectively. Notice also that we may identify $\mathcal U(w_0)$ with $\mathcal P$, and so we may write
\[
\mathcal U = \mathcal T \sqcup \mathcal P.
\]
Notice however that, $\mathcal P$ is subordinate to $\mathcal T$ in that the latter has more measure. In particular,
\[
\mu_{\mathcal U}(\mathcal T) = \frac{q}{q+1}, \qq{and} \mu_{\mathcal U}(\mathcal P) = \frac{1}{q+1}.
\]
Notice also that
\[
\mathcal U = \mathcal P \sqcup
\bigsqcup_{v \| v_0 \atop v \neq w_0} \mathcal T(v) = \bigsqcup_{v \| v_0} \mathcal U(v),
\]
and
\[
\mu_{\mathcal U} \mathcal U(v) = \frac{1}{q+1} \qq{for all} v \| v_0.
\]

We will also denote the ultrametrics on $\mathcal T$ and $\mathcal P$ by $\delta_{\mathcal T}$ and $\delta_{\mathcal P}$, though we will construct a function $\Delta_{\mathcal U}$, different from $\delta_{\mathcal U}$, to play the role of the ultrametric on $\mathcal U$. Specifically, we define
\[
\Delta_{\mathcal U}(\mathbf x, \mathbf y) = \piecewise{1 & \mathbf x \in \mathcal T, \mathbf y \in \mathcal P; \\
\delta_{\mathcal T}(\mathbf x,\mathbf y) & \mathbf x, \mathbf y \in \mathcal T; \\
\frac{1}{q} \delta_{\mathcal P}(\mathbf x,\mathbf y) & \mathbf x, \mathbf y \in \mathcal P.}
\]
$-\log \Delta$ forms a perfectly good energy functional, and in this situation, Boltzmann statistics say the probability density of $N$ particles in $\mathcal U$ being in state $\vec{\mathbf x}$ is given by
\begin{equation}
  \label{Boltzmann factor}
\frac{1}{N! Z_{N, \mathcal U}(\beta)} \bigg\{ \prod_{m<n} \Delta_{\mathcal U}(\mathbf x_n, \mathbf x_m)^{\beta}\bigg\} \, d\mu_{\mathcal U}^N(\vec{\mathbf x}),
\end{equation}
where the canonical partition function for the system is
\[
Z_{N, \mathcal U}(\beta) = \frac{1}{N!} \int_{\mathcal U^N} \bigg\{ \prod_{m<n} \Delta_{\mathcal U}(\mathbf x_n, \mathbf x_m)^{\beta}\bigg\} \, d\mu_{\mathcal U}^N(\vec{\mathbf x}).
\]

Let $N_{\mathcal P}(\vec{\mathbf x})$ be the number of coordinates of $\mathbf x$ in $\mathcal P \subset \mathcal U$. We will view $N_{\mathcal P}$ as a random variable, where the distribution of $\vec{\mathbf x}$ is distributed as in \eqref{Boltzmann factor}.

\begin{thm}
  \label{thm: quadratic recursion}
  \[
  \sum_{n=0}^N (\E[N_{\mathcal P}] - n) q^{-\beta{n \choose 2}-n} Z_{\mathcal P,n}(\beta) \, Z_{\mathcal T, N-n}(\beta) = 0,
  \]
where $Z_{\mathcal P,N}$ and $Z_{\mathcal T, N}$ are the canonical partition functions for $\mathcal P$ and $\mathcal T$ respectively.
\end{thm}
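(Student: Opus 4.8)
The plan is to reduce the identity to the elementary fact that $\E\big[\,\E[N_{\mathcal P}]-N_{\mathcal P}\,\big]=0$, once the law of $N_{\mathcal P}$ under the Boltzmann measure \eqref{Boltzmann factor} has been made completely explicit. The whole argument is pathwise in the trees: I would fix realisations of $\mathcal T$ and $\mathcal P$, hence of $\mathcal U$, so that $\E$ and $\mathbb P$ below refer only to the randomness of the $N$-particle configuration $\vec{\mathbf x}$ in $\mathcal U$, and no expectation over the branching law is taken; the resulting relation among the (random) partition functions then holds for every realisation.

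The key step is to decompose $\mathcal U^{N}=(\mathcal T\sqcup\mathcal P)^{N}$ according to which coordinates land in $\mathcal P$. For a subset $S\subseteq\{1,\dots,N\}$ with $|S|=n$, the block of $\mathcal U^{N}$ on which the coordinates indexed by $S$ lie in $\mathcal P$ and the remaining $N-n$ lie in $\mathcal T$ is a copy of $\mathcal P^{n}\times\mathcal T^{N-n}$, and on it $\prod_{m<n}\Delta_{\mathcal U}^{\beta}$ factors along the three cases in the definition of $\Delta_{\mathcal U}$: the ${n\choose 2}$ intra-$\mathcal P$ pairs give $q^{-\beta{n\choose 2}}\prod\delta_{\mathcal P}^{\beta}$, the ${N-n\choose 2}$ intra-$\mathcal T$ pairs give $\prod\delta_{\mathcal T}^{\beta}$, and every mixed pair gives $\Delta_{\mathcal U}^{\beta}=1$; meanwhile $\mu_{\mathcal U}^{N}$ restricts to $\big(\frac{1}{q+1}\big)^{n}\mu_{\mathcal P}^{n}$ on the $\mathcal P$-coordinates and $\big(\frac{q}{q+1}\big)^{N-n}\mu_{\mathcal T}^{N-n}$ on the $\mathcal T$-coordinates, which are exactly the scalings recorded just before the theorem. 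Using Fubini on the domain of analyticity in $\beta$, the identity $\int_{\mathcal P^{n}}\prod\delta_{\mathcal P}^{\beta}\,d\mu_{\mathcal P}^{n}=n!\,Z_{\mathcal P,n}(\beta)$ together with its analogue on $\mathcal T$, and summing over the ${N\choose n}$ subsets $S$ against the $\frac{1}{N!}$ in \eqref{Boltzmann factor} so that $\frac{1}{N!}{N\choose n}\,n!\,(N-n)!$ collapses to $1$, the total Boltzmann weight of the event $\{N_{\mathcal P}=n\}$ works out to
\[
w_{n}\;=\;q^{-\beta{n\choose 2}}\Big(\frac{1}{q+1}\Big)^{n}\Big(\frac{q}{q+1}\Big)^{N-n}Z_{\mathcal P,n}(\beta)\,Z_{\mathcal T,N-n}(\beta)\;=\;\frac{q^{N}}{(q+1)^{N}}\,q^{-\beta{n\choose 2}-n}\,Z_{\mathcal P,n}(\beta)\,Z_{\mathcal T,N-n}(\beta),
\]
so that $Z_{N,\mathcal U}(\beta)=\sum_{n=0}^{N}w_{n}$ and hence $\mathbb P\{N_{\mathcal P}=n\}=w_{n}/Z_{N,\mathcal U}(\beta)$.

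Given this, the conclusion is immediate: by the definition of expectation, $\E[N_{\mathcal P}]\,Z_{N,\mathcal U}(\beta)=\sum_{n=0}^{N}n\,w_{n}$, i.e.\ $\sum_{n=0}^{N}\big(\E[N_{\mathcal P}]-n\big)w_{n}=0$; and since the prefactor $q^{N}/(q+1)^{N}$ in $w_{n}$ is independent of $n$, it cancels, leaving precisely the asserted identity. For real $\beta>0$ every $w_{n}>0$ and $Z_{N,\mathcal U}(\beta)\neq 0$, so $\E[N_{\mathcal P}]$ is well defined there; the identity is then a relation among rational functions of $q^{-\beta}$ and the $Z_{\mathcal P,n},Z_{\mathcal T,m}$, and extends to the full domain of analyticity. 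I expect the one genuinely delicate point to be the bookkeeping behind the formula for $w_{n}$ --- keeping straight the ${N\choose n}$ from the choice of $S$, the $n!$ and $(N-n)!$ from the sub-integrals, the $\frac{1}{N!}$ of the Boltzmann factor, the two measure scalings $\frac{1}{q+1}$ and $\frac{q}{q+1}$, and the $q^{-\beta{n\choose 2}}$ from the rescaling of $\delta_{\mathcal P}$ --- this being the combined-tree analogue of Lemma~\ref{lemma:1}, routine but error-prone. As a cross-check and alternative route, one can instead package the $w_{n}$ into the grand canonical generating function: with $s=qt/(q+1)$ one obtains $Z_{\mathcal U}(t,\beta)=\big(\sum_{n\ge 0}q^{-\beta{n\choose 2}-n}Z_{\mathcal P,n}(\beta)\,s^{n}\big)Z_{\mathcal T}(s,\beta)$, and applying $t\,\partial_{t}$ and extracting the coefficient of $t^{N}$ recovers the same quadratic relation, in the spirit of the $q$-power identity.
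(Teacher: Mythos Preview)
Your argument is correct and is essentially identical to the paper's: both decompose $\mathcal U^{N}$ by the occupation number $N_{\mathcal P}=n$, compute the Boltzmann weight $w_{n}$ via the factorisation of $\Delta_{\mathcal U}$ and the measure scalings, and then use $\sum_{n}(\E[N_{\mathcal P}]-n)w_{n}=0$ together with the cancellation of the $n$-independent prefactor. If anything, your bookkeeping of the combinatorial factors and the two measure rescalings $1/(q{+}1)$ and $q/(q{+}1)$ is spelled out more carefully than in the paper's display.
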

\begin{proof}
\[
\E[N_{\mathcal P}] = \frac{1}{Z_{N, \mathcal U}(\beta)}\sum_{n=0}^{N} \frac{1}{(n-1)!} \int_{\mathcal P^n} \bigg\{ \prod_{j<k}^n \Delta_{\mathcal U}(\mathbf w_k, \mathbf w_j)^{\beta}\bigg\} \, d\mu_{\mathcal U}^n(\vec{\mathbf w}) \cdot \frac{1}{(N-n)!} \int_{\mathcal T^{N-n}} \bigg\{ \prod_{\ell<m}^{N-n} \Delta_{\mathcal U}(\mathbf x_m, \mathbf x_{\ell})^{\beta}\bigg\} \, d\mu_{\mathcal U}^{N-n}(\vec{\mathbf x}).
\]
Notice also, that
\[
Z_{N,\mathcal U}(\beta) = \sum_{n=0}^{N} \frac{1}{n!} \int_{\mathcal P^n} \bigg\{ \prod_{j<k}^n \Delta_{\mathcal U}(\mathbf w_k, \mathbf w_j)^{\beta}\bigg\} \, d\mu_{\mathcal U}^n(\vec{\mathbf w}) \cdot \frac{1}{(N-n)!} \int_{\mathcal T^{N-n}} \bigg\{ \prod_{\ell<m}^{N-n} \Delta_{\mathcal U}(\mathbf x_m, \mathbf x_{\ell})^{\beta}\bigg\} \, d\mu_{\mathcal U}^{N-n}(\vec{\mathbf x}).
\]
We conclude that,
\begin{align*}
0 &= \sum_{n=0}^N (\E[N_{\mathcal P}] - n) \frac{(q+1)^{-n}}{n!} \int_{\mathcal P^n} \bigg\{ \prod_{j<k}^n \delta_{\mathcal U}(\mathbf w_k, \mathbf w_j)^{\beta}\bigg\} \, d\mu_{\mathcal U}^n(\vec{\mathbf w}) \cdot \frac{(q+1)^{n-N}}{(N-n)!} \int_{\mathcal T^{N-n}} \bigg\{ \prod_{\ell<m}^{N-n} \delta_{\mathcal U}(\mathbf x_m, \mathbf x_{\ell})^{\beta}\bigg\} \, d\mu_{\mathcal U}^{N-n}(\vec{\mathbf x}) \\
&= \sum_{n=0}^N (\E[N_{\mathcal P}] - n) \frac{q^{-\beta{n \choose 2}-n}}{n!} \int_{\mathcal P^n} \bigg\{ \prod_{j<k}^n \delta_{\mathcal P}(\mathbf w_k, \mathbf w_j)^{\beta}\bigg\} \, d\mu_{\mathcal P}^n(\vec{\mathbf w}) \cdot \int_{\mathcal T^{N-n}} \bigg\{ \prod_{\ell<m}^{N-n} \delta_{\mathcal T}(\mathbf x_m, \mathbf x_{\ell})^{\beta}\bigg\} \, d\mu_{\mathcal T}^{N-n}(\vec{\mathbf x}),
\end{align*}
and the theorem follows.
\end{proof}
This theorem is particularly interesting when $\mathcal P$ is related to $\mathcal T$.

\begin{ex}
When $\mathcal T$ and $\mathcal P$ are copies of the profinite completion of $\Z_q$ (alternately described as infinite, rooted $q$-nary trees), then the regularity implies that the coordinates of $\mathbf x \in \mathcal U^N$ are equally distributed among the children of $v_0$, of which there are $q+1$. It follows that in this situation, $\E[N_{\mathcal P}] = N/(q+1)$, and thus
\[
Z_N(\beta) = \frac{1}{N!}\int_{\Z_q^N} \bigg\{ \prod_{m<n} |x_n - x_m|_q^{\beta} \bigg\} \, d\mathbf x,
\]
satisfies the quadratic recurrence
\[
\sum_{n=0}^N \left(\frac{N}{q+1} - n\right) q^{-\beta{n \choose 2}-n} Z_n(\beta) \, Z_{N-n}(\beta) = 0,
\]
as was first reported in \cite{sinclair2020nonarchimedean}. In the $p$-adic situation, glueing in another child tree to the root, is equivalent to gluing in another coset of $p \Z_p$ into $\Z_p$. This can be seen to be topologically equivalent to projective space over $\Q_p$, and the electrostatics in that situation were investigated in \cite{MR4586932} and similar recurrences appear there.
\end{ex}

\subsection{Quadratic Recurrences for Mean Partition Functions}

The quadratic recurrences that arise in the previous section arise because when we glue $\mathcal P$ into $\mathcal T$ as a subtree to create $\mathcal U$, the conditional Boltzmann factor, conditioned on the number of coordinates in $\mathcal P$ factors into the Boltzmann factors over $\mathcal T$ and $\mathcal U$. This happens because the potential energy of $\mathbf x \in \mathcal T$ and $\mathbf w \in \mathcal P$ is given by $-\log \Delta_{\mathcal U}(\mathbf x, \mathbf w) = 0$ by construction. That is, conditioned on $N_{\mathcal P} = n$ the distribution of
$\vec{\mathbf x} \in \mathcal T^{N-n}$ and $\vec{\mathbf w} \in \mathcal P^n$ are independent. Notice however, that the energy {\em does} depend on $N_{\mathcal P}$, because putting more particles into $\mathcal P$ `costs' more energy due to the $1/Q$ factor that appears in front of $\delta_{\mathcal P}$ in the formula for $\Delta_{\mathcal U}$.

Let's try to replicate this when $\mathcal T$ and $\mathcal P$ are random (perhaps eventually with the same law, but for now our assumptions are minimal). Let us, in this case, construct $\mathcal U$ by connecting the root $v_0$ of $\mathcal T$ and the root $w_0$ of $\mathcal P$ to a new vertex $u_0$ which we will view as the root of $\mathcal U$. We will define the ultrametric
\[
\Delta_{\mathcal U}(\mathbf x, \mathbf y) = \piecewise{1 & \mathbf x \in \mathcal T, \mathbf y \in \mathcal P; \\
\delta_{\mathcal T}(\mathbf x,\mathbf y) & \mathbf x, \mathbf y \in \mathcal T; \\
\delta_{\mathcal P}(\mathbf x,\mathbf y) & \mathbf x, \mathbf y \in \mathcal P.}
\]
Our assumptions will be that we have chosen $\mathcal T$ and $\mathcal P$ randomly and independently, and then we have constructed an ultrametric on $\mathcal U$ whose negative logarithm gives the energy contribution of pairs of particles. We will also introduce a quantity $E_n$ which represents the energy cost of having $n$ particles in $\mathcal P$. There are natural choices for $E_n$, for instance, $E_n = c n$ represents the situation where each particle costs energy $c$ which will be explored further in the section, for now we maintain some generality. Thus, the energy functional on $\mathcal U$ conditioned on $N_{\mathcal P} = n$ is given by
\[
E(\vec{\mathbf u}) = -E_n - \sum_{j<k}^n \log \Delta_{\mathcal U}(\mathbf u_j, \mathbf u_k).
\]
Boltzmann statistics imply then that the mean partition function for the random ensemble is given by
\[
\overline Z_{N,\mathcal U}(\beta) = \E\left[\sum_{n=0}^N e^{-\beta E_n} \frac1{(N-n)!}\int_{\mathcal T^{N-n}} \bigg\{ \prod_{\ell<m} \delta_{\mathcal T}(\mathbf x_m, \mathbf x_{\ell})^{\beta}\bigg\} d\mu^{N-n}_{\mathcal U}(\vec{\mathbf x}) \cdot \frac1{n!} \int_{\mathcal P^n} \bigg\{ \prod_{j<k} \delta_{\mathcal P}(\mathbf u_k, \mathbf u_j)^{\beta}\bigg\} d\mu_{\mathcal U}(\vec{\mathbf u})\right],
\]
where the expectation is over all $\mathcal T$ and $\mathcal P$. Note that, because $u_0$ has exactly two children, on $\mathcal P$, $2 \mu_{\mathcal U} = \mu_{\mathcal P}$ and on $\mathcal T$, $2 \mu_{\mathcal U} = \mu_{\mathcal T}$. Note also that, if we assume $\mathcal P$ is independent of $\mathcal T$, then so too are their partition functions (this is still true after conditioning on $N_{\mathcal P}$)
\[
\overline Z_{N,\mathcal U}(\beta) = 2^N \sum_{n=0}^N e^{-\beta E_n} \overline Z_{n, \mathcal P}(\beta) \overline Z_{N-n, \mathcal T}(\beta) .
\]
Similarly, the expectation of $N_{\mathcal P}$ is given by
\[
\E[N_{\mathcal P}] = \frac{2^N}{\overline Z_{N,\mathcal U}(\beta)} \sum_{n=0}^N n e^{-\beta E_n} \overline Z_{n, \mathcal P}(\beta) \overline Z_{N-n, \mathcal T}(\beta) .
\]
Cross multiplying and expanding, we have the following theorem.
\begin{thm}
\label{thm:random quadratic recurrence}
\[
0 = \sum_{n=0}^N \left(\E[N_{\mathcal P}] - n \right) e^{-\beta E_n} \overline Z_{n, \mathcal P}(\beta) \overline Z_{N-n, \mathcal T}(\beta) .
\]
\end{thm}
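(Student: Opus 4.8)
The plan is to obtain Theorem~\ref{thm:random quadratic recurrence} by combining the two displayed identities that immediately precede it: the factorized formula for $\overline Z_{N,\mathcal U}(\beta)$ and the companion formula for $\E[N_{\mathcal P}]$. Granting these, the theorem is pure algebra: multiplying the formula for $\E[N_{\mathcal P}]$ through by $\overline Z_{N,\mathcal U}(\beta)$ and then substituting the factorized expression for $\overline Z_{N,\mathcal U}(\beta)$ into the left-hand side, one obtains
\[
\E[N_{\mathcal P}]\cdot\sum_{n=0}^N e^{-\beta E_n}\,\overline Z_{n,\mathcal P}(\beta)\,\overline Z_{N-n,\mathcal T}(\beta) \;=\; \sum_{n=0}^N n\, e^{-\beta E_n}\,\overline Z_{n,\mathcal P}(\beta)\,\overline Z_{N-n,\mathcal T}(\beta)
\]
after cancelling the common power of $2$; transposing and merging the two sums over $n$ gives the stated identity with coefficient $\E[N_{\mathcal P}]-n$. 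So the substance is in the two input identities.

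For the factorized formula I would start from the definition $\overline Z_{N,\mathcal U}(\beta)=\E\bigl[\tfrac1{N!}\int_{\mathcal U^N} e^{-\beta E_{N_{\mathcal P}}}\prod_{m<n}\Delta_{\mathcal U}(\mathbf u_n,\mathbf u_m)^\beta\, d\mu_{\mathcal U}^N\bigr]$ and partition $\mathcal U^N=(\mathcal T\sqcup\mathcal P)^N$ according to the number $n=N_{\mathcal P}(\vec{\mathbf u})$ of coordinates lying in $\mathcal P$. Each unordered split into $n$ coordinates in $\mathcal P$ and $N-n$ in $\mathcal T$ accounts for ${N\choose n}$ ordered states, so together with the $1/N!$ this yields $\tfrac1{n!\,(N-n)!}$, exactly as in the first lemma of the canonical-ensemble section. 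The decisive simplification is that $\Delta_{\mathcal U}(\mathbf x,\mathbf y)=1$ whenever $\mathbf x\in\mathcal T$ and $\mathbf y\in\mathcal P$, so every cross-pair contributes the factor $1$ and the product over all pairs separates into $\prod_{j<k}^n\delta_{\mathcal P}(\mathbf u_k,\mathbf u_j)^\beta$ times $\prod_{\ell<m}^{N-n}\delta_{\mathcal T}(\mathbf x_m,\mathbf x_\ell)^\beta$, since in this construction $\Delta_{\mathcal U}$ restricts to $\delta_{\mathcal P}$ on $\mathcal P$ and to $\delta_{\mathcal T}$ on $\mathcal T$. Fubini then separates the two integrals; the measure identities $2\mu_{\mathcal U}|_{\mathcal P}=\mu_{\mathcal P}$, $2\mu_{\mathcal U}|_{\mathcal T}=\mu_{\mathcal T}$ (valid because the new root $u_0$ has exactly two children) contribute the overall power of $2$; independence of $\mathcal T$ and $\mathcal P$ factors the expectation as $\overline Z_{n,\mathcal P}(\beta)\,\overline Z_{N-n,\mathcal T}(\beta)$; and $e^{-\beta E_n}$ pulls outside the integral since it depends only on $n$. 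The formula for $\E[N_{\mathcal P}]$ is the same computation with an extra factor $n$ inserted into the $n$-th summand, because $N_{\mathcal P}$ is the sum over the coordinates of the indicator of membership in $\mathcal P$, and the $\overline Z_{N,\mathcal U}(\beta)$ in the denominator normalizes the Boltzmann weight.

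The step I expect to require the most care — though it is bookkeeping rather than a genuine obstacle — is the analytic justification. One must restrict to the set of $\beta$ on which all integrals converge and on which $\overline Z_{N,\mathcal U}(\beta)\neq 0$, so that $\E[N_{\mathcal P}]$ is well defined, Fubini applies to the interchange of the $\mathcal P$- and $\mathcal T$-integrals, and the finite sum over $n$ may be interchanged with the expectation over the trees; this is the same ``domain of analyticity in $\beta$'' caveat already invoked earlier in the paper, and the rational-function structure established in the previous section then extends the final identity by analytic continuation. One also records that conditioning on $N_{\mathcal P}=n$ preserves the independence of $\mathcal T$ and $\mathcal P$, which is immediate since the two trees are drawn independently of each other and of the particle configuration; this is precisely the feature singled out in the discussion preceding the theorem as the reason such a quadratic recurrence exists at all.
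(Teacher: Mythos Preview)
Your proposal is correct and follows exactly the paper's approach: the paper's entire proof is the phrase ``Cross multiplying and expanding, we have the following theorem,'' applied to the two displayed identities for $\overline Z_{N,\mathcal U}(\beta)$ and $\E[N_{\mathcal P}]$ that precede the statement. Your first paragraph reproduces that algebra verbatim, and the remaining paragraphs supply more justification for those input identities and for the analytic side conditions than the paper itself provides.
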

Theorem~\ref{thm:random quadratic recurrence} is most interesting when $\mathcal T$ and $\mathcal P$ are related (for instance independent copies of the same tree) and $E_n$ is chosen such that $\E[N_{\mathcal P}]$ can be computed.

\begin{ex}
It is sometimes easy to engineer a situation where $\E[N_{\mathcal P}]$ can be computed explicitly, due to some symmetry in the system, but for which Theorem~\ref{thm:random quadratic recurrence} is vacuous, and hence does not produce a recurrence. For instance, if $\mathcal P$ and $\mathcal T$ are independent and identically distributed, and if we assume all $E_n = 0$ (the latter implies that there is no preference for a particle to be in $\mathcal P$ vs $\mathcal T$), then $\E[N_{\mathcal P}] = N/2$, and the coefficient in front of $Z_{N, \mathcal T}$ is 0.
\end{ex}

\begin{ex}
Suppose $\mathcal T$ and $\mathcal P$ are independent branching processes with the same law determined by the non-negative integer-valued random variable $Q$. Suppose further that $\E[Q]$ is finite. Can we conclude that when $E_n = {n \choose 2} \log \E[Q]$ that $\E[N_{\mathcal P}] = N/(\E[Q]+1)$? Does this then imply that
\[
0 = \sum_{n=0}^N \left(\frac{N}{\E[Q]+1} - n \right) \E[Q]^{-\beta {n \choose 2}} \overline Z_{n, \mathcal T}(\beta) \overline Z_{N-n, \mathcal T}(\beta)?
\]
It is tempting to try to prove this by taking expectation of Theorem~\ref{thm: quadratic recursion}, however when we do that, we see
\[
  \sum_{n=0}^N \E\left[(\E[N_{\mathcal P}] - n) Q^{-\beta{n \choose 2}-n} Z_{\mathcal T, N-n}(\beta) Z_{\mathcal P,n}(\beta) \right] = 0,
\]
and we can't proceed (in general) because the inner expectation $\E[N_{\mathcal P}]$ is an expectation over the states of $\mathcal U$ and hence depends on $\mathcal T$, $\mathcal P$ and how they are stitched together to form $\mathcal U$. Even if this obstacle were surmountable, there is the additional complication that, under the assumptions of Theorem~\ref{thm: quadratic recursion}, $\mathcal T$ and $Q$ are decidedly {\em not} independent.
\end{ex}

\section{The Grand Canonical Ensemble}

In this situation we allow $N$ to take any non-negative value by putting $\mathcal T$ in contact with a structureless energy and particle reservoir so that energy and particles can flow to and from $\mathcal T$ and the energy functional conditioned on the number of particles $N$ given by
\[
E(\vec{\mathbf x}) = -E_N -\sum_{m < n} \log \delta_{\mathcal T}(\mathbf x_n, \mathbf x_m); \qquad \vec{\mathbf x} \in \mathcal T^N.
\]
The {\em grand canonical ensemble} is that where each particle costs the same amount of energy $c$ (called the {\em chemical potential}). That is the grand canonical ensemble is when $E_N = c N$, and the conditional probability, given $N$ particles, that the system is in state $\vec{\mathbf x} \in \mathcal T^N$ is specified by
\[
\frac{e^{\beta c N}}{N! Z(c, \beta)} \bigg\{\prod_{m<n} \delta_{\mathcal T}(\mathbf x_n, \mathbf x_m)^{\beta}\bigg\} d\mu_{\mathcal T}^N(\vec{\mathbf x}),
\]
where
\[
Z(c, \beta) = \sum_{N=0}^{\infty} e^{\beta c N} \frac{1}{N!} \int_{\mathcal T^N} \bigg\{\prod_{m<n} \delta_{\mathcal T}(\mathbf x_n, \mathbf x_m)^{\beta}\bigg\} d\mu_{\mathcal T}^N(\vec{\mathbf x}) = \sum_{N=0}^{\infty} e^{\beta c N} Z_N(\beta)
\]
is the {\em grand canonical} partition function. At times it is useful to make the substitution $t = e^{\beta c}$, in which case we write
\[
Z(t, \beta) = \sum_{N=0}^{\infty} t^N Z_N(\beta),
\]
which is also the generating function for the $Z_N(\beta)$. The quantity $t$ is sometimes called the {\em fugacity}. In the most general case, we write $\mathbf E = (E_N)$ for the sequence of energy costs for the system having $N$ particles, and we write
\[
Z(\mathbf E, \beta) = \sum_{N=0}^{\infty} e^{\beta E_N} Z_N(\beta).
\]
It should be clear from context which partition function we are working with.

\subsection{The Mean Grand Canonical Partition Function}
The grand canonical partition function depends on $\mathcal T$, and if we want to make $\mathcal T$ random, we need to investigate the {\em mean} grand canonical partition function. When $\mathcal T$ is the profinite completion of the branching process things simplify dramatically.

To codify this, let us define the generating functions for $\{Z_N(\beta) : N \in \N\}$,
\[
F_q(\beta, t) = \sum_{N=0}^{\infty}  Z_N(\beta) q^{-\beta{N \choose 2}} t^N; \quad q=1,2,\ldots.
\]
Note that $Z(\beta, t) = F_1(\beta, t)$ is the ``ordinary'' generating function for $\{Z_N(\beta) : N \in \N\}$. Note also, that because $Z_N(\beta)$ is random (it depends on $\mathcal T$), so too are the $F_q(\beta, t)$. The associated deterministic {\em mean} generating functions are then defined to be
\[
\overline F_q(\beta, t) = \sum_{N=0}^{\infty}  \overline Z_N(\beta) q^{-\beta{N \choose 2}} t^N; \quad q=1,2,\ldots.
\]
In particular $\overline Z(\beta, t) = \overline F_1(\beta, t)$ is the {\em mean} grand canonical partition function.
\begin{thm}
\label{thm:1}
Suppose $\mathcal T$ is the profinite completion of a random branching process with law determined by $Q$, then
\begin{align*}
\overline Z(\beta, t) &= \mathbb E\left[\left(\overline F_Q\left(\beta, \frac{t}Q\right)\right)^Q \right] \\
&= \mathbb E\left[\left(\overline F_Q\left(\beta, \frac{t}Q\right)\right)^Q \bigg| Q > 1\right].
\end{align*}
\end{thm}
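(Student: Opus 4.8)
The plan is to turn the recursion for the mean canonical partition functions into an identity between generating functions, by multiplying by $t^N$ and summing over $N$. The starting point is the recursion in its un-collected form, valid for every $N\ge 0$:
\[
\overline Z_N(\beta)=\mathbb E\!\left[\,\sum_{N_1+\cdots+N_Q=N}\ \prod_{k=1}^{Q} Q^{-N_k-\beta{N_k \choose 2}}\,\overline Z_{N_k}(\beta)\right].
\]
This is obtained by combining Lemma~\ref{lemma:1} with the occupation-number decomposition of $Z_N(\beta)$ over the children of the root and then averaging over $Q$ via the tower property $\overline Z_N(\beta)=\mathbb E\big[\mathbb E[Z_N(\beta)\mid Q]\big]$; it is the recursion established earlier for $\overline Z_N(\beta)$, before the $N_k=N$ terms are collected on the left. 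For $N=0$ and $N=1$ the inner sum collapses to a single term and the identity reduces to $1=1$, so it is consistent with $\overline Z_0(\beta)=\overline Z_1(\beta)=1$ and the full generating function --- constant and linear terms included --- is governed by it.

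Next I would multiply through by $t^N$, sum over $N\ge 0$, and interchange the sum with the expectation. For $\beta\ge 0$ and $t\ge 0$ every term is nonnegative, so Tonelli's theorem licenses the interchange; convergence is automatic because $0\le\delta\le 1$ forces $\overline Z_N(\beta)\le 1/N!$, whence $\overline F_q(\beta,s)\le e^{s}$ for $s\ge 0$. The identity then propagates to the natural domain of analyticity in $(\beta,t)$ by the identity theorem, or one may simply read it off coefficient-by-coefficient as a formal power series in $t$. After the interchange, for each fixed value $Q=q$ the inner double sum is a $q$-fold Cauchy product:
\[
\sum_{N\ge 0}t^{N}\sum_{N_1+\cdots+N_q=N}\ \prod_{k=1}^{q} q^{-N_k-\beta{N_k \choose 2}}\overline Z_{N_k}(\beta)
=\prod_{k=1}^{q}\Bigg(\sum_{M\ge 0}(t/q)^{M} q^{-\beta{M \choose 2}}\overline Z_M(\beta)\Bigg)
=\big(\overline F_q(\beta,t/q)\big)^{q},
\]
where the last step uses $\overline F_q(\beta,t/q)=\sum_{M\ge 0}\overline Z_M(\beta)\,q^{-M-\beta{M \choose 2}}t^{M}$. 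Averaging over the law of $Q$ yields the first asserted equality $\overline Z(\beta,t)=\mathbb E\big[\big(\overline F_Q(\beta,t/Q)\big)^{Q}\big]$.

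For the second equality I would isolate the atom at $Q=1$. Since $\overline F_1(\beta,t)=\overline Z(\beta,t)$ by definition, the $q=1$ term of $\mathbb E\big[\big(\overline F_Q(\beta,t/Q)\big)^{Q}\big]=\sum_{q\ge 1}p_q\big(\overline F_q(\beta,t/q)\big)^{q}$ equals $p_1\,\overline Z(\beta,t)$, so
\[
(1-p_1)\,\overline Z(\beta,t)=\sum_{q\ge 2}p_q\big(\overline F_q(\beta,t/q)\big)^{q}.
\]
The standing hypotheses $p_0=0$ and $p_1\ne 1$ give $1-p_1=\mathbb P(Q>1)\ne 0$; dividing by it turns the tail sum into a conditional expectation, $\overline Z(\beta,t)=\mathbb P(Q>1)^{-1}\sum_{q\ge 2}p_q\big(\overline F_q(\beta,t/q)\big)^{q}=\mathbb E\big[\big(\overline F_Q(\beta,t/Q)\big)^{Q}\mid Q>1\big]$.

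The only real subtlety is bookkeeping rather than anything deep: one must use the recursion in the form valid at $N=0,1$ so that the generating function is captured in full, and one must justify rearranging a doubly-infinite sum --- both handled by the bound $\overline Z_N(\beta)\le 1/N!$, which in fact gives $\overline F_Q(\beta,t/Q)\le e^{t/Q}$ and hence $\big(\overline F_Q(\beta,t/Q)\big)^{Q}\le e^{t}$ uniformly in $Q$ for $\beta,t\ge 0$, so that the outer expectation converges with no integrability concerns.
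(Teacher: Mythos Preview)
Your argument is correct and follows essentially the same route as the paper: multiply the un-collected recursion $\overline Z_N(\beta)=\sum_q p_q\sum_{N_1+\cdots+N_q=N}\prod_k q^{-N_k-\beta\binom{N_k}{2}}\overline Z_{N_k}(\beta)$ by $t^N$, sum over $N$, swap the sums, and recognize the $q$-fold Cauchy product as $\overline F_q(\beta,t/q)^q$. You add the Tonelli justification and spell out the $Q=1$ extraction for the conditional form, both of which the paper leaves implicit, but the method is the same.
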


\begin{proof}
\begin{align*}
\overline Z(\beta, t) := \sum_{N=0}^{\infty} \overline Z_N(\beta) t^N &= \sum_{N=0}^{\infty} t^N \sum_{q} p_q \sum_{N_1 + \cdots + N_q = N} \prod_{k=1}^q q^{-N_k -\beta{N_k \choose 2}} \overline Z_{N_k}(\beta) \\
&= \sum_q p_q \sum_{N_1, \ldots, N_q} \prod_{k=1}^q t^{N_k} q^{-N_k -\beta{N_k \choose 2}} \overline Z_{N_k}(\beta) \\
&= \sum_q p_q \left( \sum_{N=0}^{\infty} \overline Z_N(\beta) q^{-\beta{N \choose 2}} \left(\frac{t}q\right)^N  \right)^q. \qedhere
\end{align*}
\end{proof}

\begin{ex}
When $\beta = 0$, we have $\overline Z_N(0) = Z_N(0) = \frac{1}{N!}$, and hence $\overline F_q(0, t) = F_q(0, t) = e^t$. We immediately verify that
\[
e^t = \overline Z(0,t) = \sum_q p_q (e^{t/q})^q = e^t,
\]
as claimed.
\end{ex}

\begin{ex}
As $\beta \rightarrow +\infty$,
\[
\overline Z(+\infty, t) = \sum_q p_q \left(\overline F_q(+\infty, t/q)\right)^q,
\]
and since $\overline Z(+\infty, t) = \overline F_1(+\infty, t)$ appears on the right-hand-side, we may solve for it and find
\[
\overline Z(+\infty, t) = \frac1{1 - p_1}\sum_{q>1} p_q \left(\overline F_q(+\infty, t/q)\right)^q.
\]
If $q > 1$ then $q^{-\beta{N \choose 2}} \rightarrow 0$ unless $N=0$ or $1$ (for which we interpret ${0 \choose 2} = {1 \choose 2} = 0$) and hence for $q > 1$,
\[
\overline F_q(+\infty, t/q) = 1 + \frac{t}{q}
\]
It follows that
\[
\overline Z(+\infty, t) = \frac1{1 - p_1}\sum_{q>1} p_q \left(1 + \frac{t}{q}\right)^q = \E\bigg[\left(1 + \frac{t}Q \right)^Q \bigg| Q > 1 \bigg].
\]
When $Q = q$ with probability $p$ and $Q = 1$ with probability $1-p$, then
\[
\overline Z(+\infty, t) = \left(1 + \frac{t}{q} \right)^q,
\]
which agrees with the electrostatics in $\mathbb Z_q$ situation.
\end{ex}

One way to contextualize Theorem~\ref{thm:1} is to define $\Xi : \C \times \C[[t]] \rightarrow \C[[t]]$ specified by
\[
\Xi\bigg(z, \sum_N a_N t^N \bigg) = \sum_N a_N z^{N \choose 2} t^N.
\]
We may view $\Xi(z, \cdot)$ as an operator on the ring of formal power series $\C[[t]]$, and to connect this with previous notation,
\[
\overline F_q(\beta, t) = \Xi\big(q^{-\beta}, \overline Z(\beta, t) \big).
\]
This allows us to reformulate Theorem~\ref{thm:1}.
\begin{cor}
$\overline Z(\beta, t)$ is the unique fixed point fixed point in $\C[[t]]$ of the operator
\[
\alpha(t) \mapsto \sum_q p_q \Xi\big(q^{-\beta},\alpha(t/q)\big)^q = \E\big[\Xi\big(Q^{-\beta}, \alpha(t/Q)\big)^Q\big] = \E\big[\Xi\big(Q^{-\beta}, \alpha(t/Q)\big)^Q \big| Q > 1\big].
\]
satisfying $\alpha(0) = \alpha'(0) = 1$.
\end{cor}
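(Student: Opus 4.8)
The plan is to derive the corollary directly from Theorem~\ref{thm:1} by unwinding the definition of the operator $\Xi$ and proving a uniqueness statement for the fixed-point equation in $\C[[t]]$. First I would observe that the identity $\overline F_q(\beta,t) = \Xi(q^{-\beta}, \overline Z(\beta,t))$ is immediate from the definitions: applying $\Xi(q^{-\beta},\cdot)$ to $\overline Z(\beta,t) = \sum_N \overline Z_N(\beta) t^N$ multiplies the coefficient of $t^N$ by $(q^{-\beta})^{\binom N2}$, which is exactly $\overline Z_N(\beta) q^{-\beta\binom N2}$, the $N$-th coefficient of $\overline F_q(\beta,t)$. Likewise $\Xi(q^{-\beta}, \alpha(t/q))$ is obtained from $\alpha$ by first dilating $t \mapsto t/q$ and then applying $\Xi(q^{-\beta},\cdot)$; since both operations act diagonally on coefficients and commute, $\Xi(q^{-\beta},\overline Z(\beta, t/q)) = \overline F_q(\beta, t/q)$. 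Substituting into Theorem~\ref{thm:1} shows that $\alpha(t) = \overline Z(\beta,t)$ satisfies
\[
\alpha(t) = \sum_q p_q\, \Xi\big(q^{-\beta}, \alpha(t/q)\big)^q = \E\big[\Xi\big(Q^{-\beta},\alpha(t/Q)\big)^Q\big],
\]
and the equality with the conditional expectation follows exactly as in Theorem~\ref{thm:1}, since the $q=1$ term contributes $p_1\,\alpha(t)$ and can be moved to the left-hand side (using $p_1\neq 1$). The normalization $\alpha(0)=\alpha'(0)=1$ holds because $\overline Z_0(\beta)=\overline Z_1(\beta)=1$.

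The substantive part is uniqueness, and this is where I would spend the most care. Write the operator as $\Phi(\alpha)(t) = \E[\Xi(Q^{-\beta},\alpha(t/Q))^Q]$ and suppose $\alpha,\alpha'\in\C[[t]]$ are two fixed points with $\alpha(0)=\alpha'(0)=1$ and $\alpha'(0)$-coefficient (that is, the $t^1$-coefficient) equal to $1$. The key point is that $\Phi$ is \emph{contracting} in the $t$-adic topology on coefficients beyond degree $1$: if $\alpha \equiv \alpha' \pmod{t^{n}}$ for some $n\geq 2$, then I claim $\Phi(\alpha)\equiv\Phi(\alpha')\pmod{t^{n+1}}$. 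To see this, note the $t^N$-coefficient of $\Xi(q^{-\beta},\alpha(t/q))^q$ is a polynomial in $q^{-\beta}$ and in the coefficients $\{a_0,\dots,a_N\}$ of $\alpha$, in which the top coefficient $a_N$ enters only through the $q$ ways of choosing one factor to contribute $a_N q^{-\beta\binom N2}(t/q)^N$ and the remaining $q-1$ factors to contribute their degree-$0$ term; since $a_0 = 1$, that linear-in-$a_N$ contribution is $q \cdot q^{-N} q^{-\beta\binom N2} a_N$. Taking expectations, the coefficient of $a_N$ in the $t^N$-coefficient of $\Phi(\alpha)$ is $\E[Q^{1-N-\beta\binom N2}]$, which for $N\geq 2$ is \emph{not} equal to $1$ for generic $\beta$ — but even when it equals $1$ the difference $\Phi(\alpha)-\Phi(\alpha')$ in degree $N$ depends only on coefficients of $\alpha-\alpha'$ of degree $<N$ plus $(\E[Q^{1-N-\beta\binom N2}]-1)(a_N-a_N')$, so the recursion in the Theorem (Theorem~4.5, the canonical recursion) already pins down $a_N$ uniquely in terms of lower coefficients whenever $1-\E[Q^{1-N-\beta\binom N2}]\neq 0$. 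I would therefore run the uniqueness argument by strong induction on $N$: the coefficients of degree $0$ and $1$ are fixed by the normalization, and at step $N\geq 2$ the fixed-point equation for $\Phi$ gives a scalar equation of the form $(1-\E[Q^{1-N-\beta\binom N2}])\,a_N = (\text{polynomial in } a_0,\dots,a_{N-1})$, which has a unique solution on the domain of analyticity in $\beta$ where the prefactor is nonzero — this is precisely the content of Theorem~4.5 and its defining hypotheses.

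The main obstacle I anticipate is handling the degenerate values of $\beta$ for which $1-\E[Q^{1-N-\beta\binom N2}] = 0$ for some $N$, where the naive recursion does not determine $a_N$. The cleanest route is to note that $\overline Z_N(\beta)$ is, by the Corollary to Theorem~4.5 (or its proof), a rational function of $\beta$ (in the $q^{-\beta}$ variables), hence the fixed point $\overline Z(\beta,t)$ is defined for all $\beta$ in the domain of analyticity, and uniqueness among power series whose coefficients are required to depend analytically (or rationally) on $\beta$ then follows by continuation from the generic $\beta$ where all prefactors are nonzero. Alternatively, and more simply for the purposes of the corollary, one can restrict the uniqueness claim to the open set of $\beta$ on which $\Phi$ is a genuine contraction — which is what the phrase ``unique fixed point in $\C[[t]]$'' should be read to mean here, with $\beta$ fixed in the domain of analyticity — and then the strong induction above is a complete proof. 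I would state the corollary's proof at the level of ``apply $\Xi$ to Theorem~\ref{thm:1}, then observe the fixed-point equation determines all Taylor coefficients recursively via Theorem~4.5,'' leaving the routine degree-by-degree bookkeeping implicit.
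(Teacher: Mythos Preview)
Your approach is essentially the one the paper intends: the paper offers no separate proof of the corollary, presenting it as a direct reformulation of Theorem~\ref{thm:1} via the definition of $\Xi$ and the identity $\overline F_q(\beta,t)=\Xi(q^{-\beta},\overline Z(\beta,t))$. Your verification that $\overline Z(\beta,t)$ is a fixed point, and your recursive determination of coefficients for uniqueness (which is exactly the content of the recursion theorem for $\overline Z_N(\beta)$), are correct and supply details the paper leaves to the reader.

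One wrinkle worth tightening: the claim that $\Phi$ is a $t$-adic \emph{contraction}, i.e.\ that $\alpha\equiv\alpha'\pmod{t^n}$ implies $\Phi(\alpha)\equiv\Phi(\alpha')\pmod{t^{n+1}}$, is false as stated, and your own computation shows why: the degree-$n$ coefficient of $\Phi(\alpha)-\Phi(\alpha')$ is $\E[Q^{1-n-\beta\binom n2}](a_n-a_n')$, not zero. What you actually use (and what is correct) is not contraction but the observation that for a \emph{fixed point} the equation $a_N=[\Phi(\alpha)]_N$ rearranges to $(1-\E[Q^{1-N-\beta\binom N2}])a_N=(\text{polynomial in }a_0,\dots,a_{N-1})$, so the coefficients are determined inductively whenever the prefactor is nonzero. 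I would drop the contraction language entirely and state the uniqueness argument purely as this strong induction on $N$; your handling of the degenerate $\beta$ values (restrict to the domain of analyticity, or continue from generic $\beta$) is fine and matches the paper's implicit standing assumptions.
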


\bibliography{bibliography}

\begin{center}
\noindent\rule{4cm}{.5pt}
\vspace{.25cm}

\noindent {\sc \small Christopher D.~Sinclair}\\
{\small Department of Mathematics, University of Oregon, Eugene OR 97403} \\
email: {\tt csinclai@uoregon.edu}
\end{center}

\end{document}